\documentclass[lettersize,journal]{IEEEtran}
\usepackage{amsmath,amsfonts}
\usepackage{algorithmic}
\usepackage{algorithm}
\usepackage{array}
\usepackage[caption=true,font=footnotesize,labelfont=rm,textfont=rm]{subfig}
\usepackage{textcomp}
\usepackage{stfloats}
\usepackage{url}
\usepackage{verbatim}
\usepackage{graphicx}
\usepackage{cite}
\usepackage{xcolor}
\usepackage[acronym]{glossaries}
\usepackage{siunitx}
\usepackage{graphicx}
\usepackage{amssymb}  
\usepackage{caption}
\usepackage{flushend}
\usepackage{booktabs} 
\usepackage{mathtools}

\usepackage{pgfplots}
  \pgfplotsset{compat=newest}
  \usetikzlibrary{plotmarks}
  \usetikzlibrary{arrows.meta}
  \usepgfplotslibrary{patchplots}
  \usepackage{grffile}

  \pgfplotsset{plot coordinates/math parser=false}
  \newlength\figureheight
  \newlength\figurewidth

\makeglossaries
\newacronym{pms}{PMS}{power management system}
\newacronym{dgu}{DGU}{distributed generation unit}
\newacronym{epu}{EPU}{electric propulsion unit}
\newacronym{ems}{EMS}{energy management system}
\newacronym{mg}{MG}{microgrid}
\newacronym{epds}{EPDS}{electrical power distribution system}
\newacronym{pcc}{PCC}{point of common coupling}
\newacronym{mea}{MEA}{more-electric aircraft}
\newacronym{cpl}{CPL}{constant power load}
\newacronym{tgs}{TGS}{turbo generator system}
\newacronym{ess}{ESS}{energy storage system}
\newacronym{pmsm}{PMSM}{permanent magnet synchronous motor}
\newacronym{phil}{PHIL}{power hardware-in-the-loop}
\newacronym{uav}{UAV}{unmanned aerial vehicle}
\newacronym{flc}{FLC}{fuzzy logic controller}
\newacronym{soc}{SOC}{state-of-charge}
\newacronym{rtc}{RTC}{real-time computer}

\usepackage{tikz}
\usepackage{tikz}

\usetikzlibrary{shapes.geometric}
\usetikzlibrary{shapes.misc}
\usetikzlibrary{shapes.symbols}
\usetikzlibrary{arrows, arrows.meta, positioning, 3d,calc}
\usetikzlibrary{decorations.pathreplacing}
\tikzset{
  >=stealth',
}

\usepackage[straight voltages, american]{circuitikz}
\ctikzset{resistors/scale=0.7,
  capacitors/scale=0.7,
  inductors/scale=0.7,
  instruments/scale=0.5,
}

\makeatletter
\tikzset{self loop/.style =  {to path={
  \pgfextra{}
  [looseness=12,min distance=10mm]
  \tikz@to@curve@path},font=\sffamily\small
  }}
\makeatother

\makeatletter
\tikzset{%
  block/.style      = {draw, thick, rectangle, minimum height = .5cm, minimum width = 1cm, align=center, font=\footnotesize},
  hub/.style        = {draw, circle, fill=black, minimum size=.10cm, inner sep=0.00cm},
  symhub/.style     = {hub, fill=none, draw, inner sep=0.005cm, scale=.5},
  connection/.style = {draw, ->},
  conlabel/.style   = {scale=0.75},
  ede/.style = {circle, fill=lightgray, font=\scriptsize, inner sep=0.05cm, minimum size=.25cm},
  eae/.style = {draw, circle, font=\scriptsize, inner sep=0.05cm, minimum size=.25cm},
}
\makeatother

\makeatletter
\newcommand{\pgfwest}{%
    \pgf@process{\northeast}%
    \pgf@ya=.5\pgf@y%
    \pgf@process{\southwest}%
    \pgf@y=.5\pgf@y%
    \advance\pgf@y by \pgf@ya%
}
\newcommand{\pgfeast}{%
    \pgf@process{\southwest}%
    \pgf@ya=.5\pgf@y%
    \pgf@process{\northeast}%
    \pgf@y=.5\pgf@y%
    \advance\pgf@y by \pgf@ya%
}
\newcommand{\pgfsouth}{%
    \pgf@process{\northeast}%
    \pgf@xa=.5\pgf@x%
    \pgf@process{\southwest}%
    \pgf@x=.5\pgf@x%
    \advance\pgf@x by \pgf@xa%
}
\newcommand{\pgfnorth}{%
    \pgf@process{\southwest}%
    \pgf@xa=.5\pgf@x%
    \pgf@process{\northeast}%
    \pgf@x=.5\pgf@x%
    \advance\pgf@x by \pgf@xa%
}

\long\def\pgfshapeaddanchor#1#2{%
{%
  \def\pgf@sm@shape@name{#1}%
  \let\anchor=\pgf@sh@anchor%
  #2}%
}
\pgfshapeaddanchor{rectangle}{%
  \anchor{west north west}{%
    \pgf@process{\northeast}%
    \pgf@ya=1.5\pgf@y%
    \pgf@process{\southwest}%
    \pgf@y.5\pgf@y%
    \advance\pgf@y by \pgf@ya%
    \pgf@y=.5\pgf@y%
  }%
  \anchor{north north west}{%
    \pgf@process{\southwest}%
    \pgf@xa=1.5\pgf@x%
    \pgf@process{\northeast}%
    \pgf@x.5\pgf@x%
    \advance\pgf@x by \pgf@xa%
    \pgf@x=.5\pgf@x%
  }%
  \anchor{west south  west}{%
    \pgf@process{\northeast}%
    \pgf@ya=.5\pgf@y%
    \pgf@process{\southwest}%
    \pgf@y=1.5\pgf@y%
    \advance\pgf@y by \pgf@ya%
    \pgf@y=.5\pgf@y%
  }%
  \anchor{south south  west}{%
    \pgf@process{\northeast}%
    \pgf@xa=.5\pgf@x%
    \pgf@process{\southwest}%
    \pgf@x=1.5\pgf@x%
    \advance\pgf@x by \pgf@xa%
    \pgf@x=.5\pgf@x%
  }%
  \anchor{west -1}{%
    \pgfwest%
    \advance\pgf@y by -0.5 cm%
  }%
  \anchor{west 1}{%
    \pgfwest%
    \advance\pgf@y by 0.5 cm%
  }%
  \anchor{west A}{%
    \pgfwest%
    \advance\pgf@y by 0.25 cm%
  }
  \anchor{west B}{%
    \pgfwest%
    \advance\pgf@y by -0.25 cm%
  }%
  \anchor{east -1}{%
    \pgfeast%
    \advance\pgf@y by -0.5 cm%
  }%
  \anchor{east 1}{%
    \pgfeast%
    \advance\pgf@y by 0.5 cm%
  }%
  \anchor{east A}{%
    \pgfeast%
    \advance\pgf@y by 0.25 cm%
  }
  \anchor{east B}{%
    \pgfeast%
    \advance\pgf@y by -0.25 cm%
  }%
  \anchor{north -1}{%
    \pgfnorth%
    \advance\pgf@x by -0.5 cm%
  }%
  \anchor{north 1}{%
    \pgfnorth%
    \advance\pgf@x by 0.5 cm%
  }%
  \anchor{north A}{%
    \pgfnorth%
    \advance\pgf@x by -0.25 cm%
  }
  \anchor{north B}{%
    \pgfnorth%
    \advance\pgf@x by 0.25 cm%
  }%
  \anchor{south -1}{%
    \pgfsouth%
    \advance\pgf@x by -0.5 cm%
  }%
  \anchor{south 1}{%
    \pgfsouth%
    \advance\pgf@x by 0.5 cm%
  }%
  \anchor{south A}{%
    \pgfsouth%
    \advance\pgf@x by 0.25 cm%
  }
  \anchor{south B}{%
    \pgfsouth%
    \advance\pgf@x by -0.25 cm%
  }%
}
\makeatother

\newtheorem{proposition}{Proposition}

\newtheorem{remark}{Remark}

\newtheorem{assumption}{Assumption}
\newtheorem{objective}{Objective}

\newtheorem{proof}{Proof}

\newcommand{\coo}{\ensuremath{\mathrm{CO_2}}}

\begin{document}

\title{Distributed Adaptive Control for DC Power Distribution in Hybrid-Electric Aircraft: Design and Experimental Validation$^*$}

\author{Wasif H. Syed$^{1}$, Juan E. Machado$^{1}$, Hans Würfel$^{2}$, Ekrem Hanli$^{3}$ and Johannes Schiffer$^{1,4}$
\thanks{$^*$ An abridged version of this article was presented at the 2024 American Control Conference (ACC), Toronto, CA, July 2024. (See \cite{syed}.)  This work is partially supported by the Bundesministerium für Wirtschaft und Klimaschutz (BMWK), Germany, under LuFo VI - 2 joint project ``Safe and reliable electrical and thermal networks for hybrid-electric drive systems (ETHAN, project number: 20L2103F1". This research also received funding from the German Federal Government, the Federal Ministry of Education and Research, and the State of Brandenburg within the framework of the joint project EIZ: Energy Innovation Center (project numbers 85056897 and 03SF0693A).}
\thanks{$^{1}$ Chair of Control Systems and Network Control Technology, Brandenburg University of Technology Cottbus-Senftenberg, 03046, Germany, 
        {\tt\small syed@b-tu.de (Wasif H. Syed)}, {\tt\small machadom@b-tu.de (Juan E. Machado)}, {\tt\small schiffer@b-tu.de (Johannes Schiffer).}} %
\thanks{$^{2}$  Potsdam Institute for Climate Impact Research, Germany, 
        {\tt\small wuerfel@pik-potsdam.de (Hans Würfel).}}%
\thanks{$^{3}$ Rolls-Royce Deutschland Ltd \& Co,  \\
        {\tt\small ekrem.hanli@rolls-royce.com (Ekrem Hanli).}}%
\thanks{$^{4}$ Fraunhofer Research Institution for Energy Infrastructures and Geotechnologies, Cottbus, 03046, Germany.  
        }%
}

\markboth{~}%
{Shell \MakeLowercase{\textit{et al.}}: A Sample Article Using IEEEtran.cls for IEEE Journals}

\maketitle  

\begin{abstract}
To reduce \textbf{\coo } emissions and tackle increasing fuel costs, the aviation industry is swiftly moving towards the electrification of aircraft. From the viewpoint of systems and control, a key challenge brought by this transition corresponds to the management and safe operation of the propulsion system's onboard electrical power distribution network. In this work, for a series-hybrid-electric propulsion system, we propose a distributed adaptive controller for regulating the voltage of a DC  bus that energizes the electricity-based propulsion system. The proposed controller---whose design is based on principles of back-stepping, adaptive, and passivity-based control techniques---also enables the proportional sharing of the electric load among multiple converter-interfaced sources, which reduces the likelihood of over-stressing individual sources. Compared to existing control strategies, our method ensures stable,  convergent, and accurate voltage regulation and load-sharing even if the effects of power lines of {\em unknown} resistances and inductances are considered. The
performance of the proposed control scheme is experimentally
validated and compared to state-of-the-art controllers in a power
hardware-in-the-loop (PHIL) environment.

\end{abstract}

\begin{IEEEkeywords}
Distributed control, adaptive control, DC microgrid control, passivity-based control, Lyapunov methods, and experimental validation. 
\end{IEEEkeywords}

\section{Introduction}
\subsection{Motivation}

Currently, the propulsion systems of commercial aircraft are predominantly powered by fossil fuels \cite{WorldData}. Opportunely, the transition towards a more sustainable aviation sector is underway. On the one hand, this is motivated by the urgency of reducing \coo{} emissions to the atmosphere,  as commercial aircraft operations are currently liable for 2.5\% of global emissions \cite{WorldData}. On the other hand, the costs of aviation fuels are expected to increase due to regulatory measures \cite{FuelCost}. 
The transition towards a more sustainable aviation sector entails further electrification of aircraft and their propulsion systems, as it would substantially increase the capabilities to utilize renewable sources either indirectly, i.e.,  by storing their energy in onboard batteries,  or directly, e.g.,  via the co-operation of both conventional and varied sustainable sources, such as fuel cells in hybrid propulsion configurations \cite{Schefer}.

In this work, the focus is on series-hybrid-electric propulsion systems, which are considered to be well-suited for larger aircraft \cite{Schefer,XieYe} and for multi-rotor aircraft \cite{XieYe}. Unlike conventional setups, this type of aircraft propulsion system is based on both electrical and mechanical energy. In particular, the system realizing electric propulsion consists of multiple converter-interfaced sources, which are connected in parallel to a common DC bus from which the \gls{epu} (i.e., an inverter-interfaced electric motor) is then energized \cite{XieYe}. The correct operation of the propulsion load strongly depends on the DC bus having a voltage near a prespecified nominal value \cite{ZhangSau}. Critically, load changes or the connection or disconnection of electric sources to the DC bus may lead to undesired voltage variations. Moreover, to prevent the over-stressing of any given source beyond its capacity, the total electric load is desired to be proportionally shared among the participating sources \cite{Noroozi2}. {\color{black} Consequently, the central research problem addressed in this paper is the design of a reliable and robust control system (i.e., robust to system parameter uncertainties) that simultaneously guarantees voltage regulation at the DC bus and proportional load sharing among sources.}

{\color{black}

\subsection{Literature Review}

The \gls{epds} of hybrid-electric propulsion systems comprise an islanded power system with generation units, power converters, and electric loads; thus, it can be modeled as an onboard DC \gls{mg} \cite{Braitor}. In this context, numerous solutions have been proposed in the DC microgrid literature to address voltage regulation tasks while achieving proportional load sharing among the generation units; see, e.g., the references reviewed in \cite{MengL2}. In particular, in \cite{Trip2},  proportional current sharing and average voltage regulation (i.e., voltage balancing) are achieved based on the presumption that the parasitic resistances associated with power converters at the generation side are either insignificant or accurately known, such that their effect is mitigated through feedforward compensation.  Nonetheless, in aeronautic applications, the use of relatively long cables introduces non-negligible inductive and resistive effects, which have to be considered in the stability analysis~\cite{Magne}. 
Voltage balancing and proportional current sharing are addressed in \cite{Nahata} for the case of non-negligible parasitic resistances via a passivity-based control scheme. Notably,  the controller's implementation requires knowledge of a lower bound on the resistances but not necessarily their exact value.  
 In \cite{HandongBai}, a distributed control scheme for a multi-bus DC MG is proposed, which simultaneously addresses approximate current sharing and voltage regulation through a trade-off factor. Additionally, sufficient conditions for closed-loop stability are provided. However, in the present work, the focus is to regulate the voltage of a common DC bus that energizes the \gls{epu} of a series-hybrid-electric propulsion system. As a result, there is no need for a trade-off between current sharing accuracy and voltage regulation since these objectives are not in conflict in this case. 
To achieve the objective of current sharing in DC \gls{mg}s, a centralized model predictive control-based consensus algorithm is proposed in \cite{Noroozi2}. The controller also guarantees that the voltage of the DC buses stays within the predefined constraints.  However, the scheme requires accurate knowledge of system parameters for proper operation. A decentralized control scheme for DC \gls{mg}s is proposed in \cite{Tucci} to achieve voltage stabilization with plug-and-play capability for DGUs. Importantly, the controller design does not require knowledge of the power-line parameters. Closed-loop stability is established using a Lyapunov function and LaSalle’s invariance principle. However, the approach does not address the problem of proportional load sharing among DGUs.

While the DC microgrid literature provides valuable insights into voltage regulation and proportional load sharing in networked DC systems, analogous control challenges have also been investigated in the aircraft \gls{epds} literature. Accordingly, the control of hybrid or all-electric propulsion system EPDS has been explored in \cite{Karunarathne, Hoenicke, bastos, Ahmed}.
In \cite{Karunarathne}, a power management system is proposed for an unmanned aerial vehicle's hybrid propulsion system driven by a fuel cell and battery. It considers the goal of managing dynamic power requirements during various flight phases, which is addressed by employing a fuzzy logic controller that regulates the power flow between the fuel cell and the battery. The fuzzy logic controller monitors the system's load demand and the battery's state-of-charge to adjust the power distribution between the sources. While this approach successfully adapts the power flow to varying operational demands, it does not explicitly address the voltage regulation problem. 
In \cite{Hoenicke}, a control strategy for hybrid-electric aircraft is presented, which manages a system composed of a fuel cell and a battery that energizes a DC bus from which propulsion is fed. This approach ensures voltage regulation and safe battery charging in various operation modes.
In \cite{bastos}, also in the context of a hybrid-electric propulsion system, a PI controller for a DC–DC converter is proposed to regulate the power flow between an \gls{ess} and a common DC-link to which the power sources (i.e., the \gls{ess} and a generator) as well as the \gls{epu} are connected. In this case, the proposed scheme ensures stable voltage regulation of the common DC bus.
In \cite{Ahmed}, a power management solution, also based on a PI control, is presented. The scheme achieves power flow management between a fuel cell, a battery, and the propulsion unit while maintaining precise voltage regulation across a potential flight profile. 
While the works  \cite{Hoenicke, bastos, Ahmed} address the problem of voltage regulation within the \gls{epds} of hybrid-electric propulsion systems, they are limited to coordinating load-sharing between only two energy sources. Moreover, these works do not explicitly seek to establish consensus on the current injections of the energy sources, i.e., proportional load-sharing. The lack of such coordination is undesirable, as it may lead to over-stressing one or more energy sources. The work in \cite{DoffSotta} proposes an MPC-based energy management strategy for hybrid-electric aircraft to optimally allocate power between propulsion components. However, it does not address the control of the \gls{epds}, including DC-bus voltage regulation, proportional load sharing among sources, or stability of the underlying electrical network.

}

\subsection{Contributions}

{\color{black}

In this work, inspired by \cite{Machado2,Trip2,Nahata}, we propose a distributed controller for voltage regulation and proportional load-sharing in a DC power distribution network representing the propulsion system of a series-hybrid-electric aircraft. The controller combines elements of back-stepping, adaptive, and passivity-based control, and guarantees local convergence of the closed-loop trajectories to an equilibrium satisfying the control objectives. Compared to \cite{Machado2,Trip2,Nahata}, the network topology considered here is structurally different, as multiple electric sources are connected to a single non-actuated load bus representing the main DC link of the propulsion system. Consequently, existing convergence analyses for multi-PCC microgrids are not directly applicable. This work, therefore, develops a control framework and corresponding convergence analysis tailored to this architecture. The main contributions are summarized as follows:

\begin{itemize}

\item \textbf{Control design for a single-load-bus architecture.} 
We develop a distributed adaptive control framework for DC power distribution networks where multiple DGUs are connected to a single non-actuated load bus. This topology is representative of the \gls{epds} of series-hybrid-electric aircraft.

\item \textbf{Reduced parameter knowledge with adaptive resistance estimation.} 
Unlike our preliminary work~\cite{syed} and \cite{Machado2}, which require exact knowledge of the line inductances, the proposed controller does not require prior knowledge of either the power-line resistances or inductances. Instead, it incorporates an adaptive mechanism that estimates the line resistances online and compensates for uncertain inductive dynamics. 

\item \textbf{Stability and convergence analysis for the considered topology.} 
We develop a convergence analysis tailored to the considered system architecture. Compared to our preliminary work~\cite{syed}, the analysis is extended to accommodate the reduced parameter knowledge. 

\item \textbf{Experimental validation in a \gls{phil} environment.} 
The proposed control framework is validated through laboratory experiments using a \gls{phil} testing setup. The experiments demonstrate robust performance under practical electrical disturbances inherent to \gls{phil} environments, such as noise, delays, and parameter uncertainties. This validation study extends our preliminary work~\cite{syed}.

\end{itemize}

}

\subsection{Organization of the Paper}
The paper is structured as follows. In Section \ref{Sec:SHEPS}, we discuss the typical architecture of a series-hybrid-electric propulsion system and its electrical components. In Section \ref{Sec:modelAndPf}, we derive an analytical model of an \gls{epds} of a series-hybrid-electric propulsion system and formalize
the control objectives. In Section \ref{Sec:Control}, we present the proposed control scheme and an asymptotic stability proof of the closed-loop system's desired equilibrium. Experimental and comparative results are presented in Section~\ref{Sec:Experiments}. Finally, in Section \ref{Sec:Conclusion}, we summarize our key findings and identify potential research directions for future work.

\subsection{Notation}
The following notation is adopted in the paper: the symbol  $\mathbb{R}$ denotes the set of real numbers;  $x = \text{col} (x_i)$ is a column vector with entries $x_i$, where $x_i$ can be either a scalar or a vector; $\text{diag}(x_i)$ is a diagonal matrix with diagonal entries $x_i$; $\boldsymbol{1}_{n} \in \mathbb{R}^n$ is the vector with all entries being equal to one, $\boldsymbol{0}_{n} \in \mathbb{R}^n$ is the vector with all entries being equal to zero and $\boldsymbol{0}_{n \times m} \in \mathbb{R}^{n \times m}$ is the matrix with all entries being equal to zero; $\mathbb{I}_n$ denotes the  $n \times n$ identity matrix; the expression $x^\top M x$ is equivalently represented as $\|x\|^2_{M}$, where $M$ is any positive definite, symmetric matrix and $x$ is a column vector. Moreover, in Tables~ \ref{table:abbreviations} and \ref{table:symbols}, we summarize the main abbreviations and symbols, respectively, used throughout the paper.

\begin{table}[ht]
\centering 
\caption{List of Abbreviations}
\label{table:abbreviations}
\begin{tabular}{ll}
\toprule
Abbreviation & Definition \\
\midrule
DGU & Distributed generation unit \\
EPDS & Electrical power distribution system \\
EPU & Electric propulsion unit \\
ESS & Energy storage system \\
MG & Microgrid \\
PCC & Point of common coupling \\
PHIL & Power hardware-in-the-loop \\
PMS & Power management system \\
TGS & Turbo generator system \\
\bottomrule
\end{tabular}
\end{table}

\begin{table}[ht]
\centering 
\caption{List of Symbols}
\label{table:symbols}
\begin{tabular}{ll}
\toprule
Notation & Definition \\
\midrule
$C_{\mathrm{dc}}$ & Capacitance of PCC \\
$\mathcal{G}$ & Connected graph to represent the EPDS topology \\
$\mathcal{G}^{\mathrm{com}}$ & Connected graph representing communication network\\ 
                 & among the DGUs  \\
$\mathcal{E}$ & Set containing the power lines \\
$\mathcal{E}^{\text{com}}$ & Set containing communication lines among the DGUs  \\
$\mathcal{E}_{\mathrm{eq}}$ & Set containing equilibrium point of the closed-loop dynamics\\
$\mathcal{N}$ & Set containing the electric buses in the EPDS \\
$I_{L}$ & Electrical load of the EPU \\
$I_{\tau,i}$ & The electric current generated by the $i$-th DGU \\
$L_{\tau,i}$ & Line inductance of the $i$-th power line \\
$R_{\tau,i}$ & Line resistance of the $i$-th power line \\
$u$ & Control law \\
$V_{\mathrm{dc}}$ & Voltage across PCC \\
$\xi_i$ & Controller states of the $i$-th  DGU\\
$w_i$ & Weight specified for the $i$-th  DGU \\
\bottomrule
\end{tabular}
\end{table}

\section{System Setup} \label{Sec:SHEPS}
In this section, we describe the architecture and the operational features of the \gls{epds} of an aircraft's series-hybrid-electric propulsion system. 

In Fig.~\ref{fig:GeneralHepArchitechture}, we show a schematic diagram with the main components of the considered propulsion system. The main power source is a \gls{tgs}, whereas an \gls{ess} acts as a secondary power source. The propulsion system is energized via two distinct lanes, each with its own \gls{epu}: the collective power of each source is injected into each lane's \gls{pcc}, which we will also refer to as the main DC load bus. Since, in nominal operating conditions, there is no power transfer between the lanes, in this work, we will focus exclusively on the operation of an individual lane and will neglect the effect that another lane may exert on it. Nonetheless, as we will see in Section 3, we generalize the system setup to accommodate for multiple energy sources.

With the objective of stating a number of simplifying modeling assumptions, below we provide further details about the main components of the propulsion system:
\begin{itemize}
    \item The \gls{tgs} comprises a gas turbine that drives a generator to produce AC electrical power. This AC power is converted to DC power using a rectifier. The \gls{tgs} is equipped with a low-level controller that accurately tracks desired voltage references, allowing the \gls{tgs} to serve as a voltage source.

    \item The \gls{ess} comprises a battery unit and a bi-directional DC-DC converter. Before a flight, the battery unit is charged, e.g., using renewable energy sources. Battery charging can also occur during a flight, using a scheme known as wind-milling, where the airflow over a free-spinning propeller generates electrical power. However, we assume that the battery operates exclusively in discharge mode all the time, with the converter functioning as a boost converter (see Remark~\ref{remark:charging}). Similar to the \gls{tgs}, the \gls{ess} is also equipped with a low-level controller that tracks desired voltage references. For more details on the voltage control of converter-interfaced sources, we refer to \cite{Morteza} and the references therein.

    {\color{black}
    \item The \gls{epu} consists of an inverter interfaced with a permanent magnet synchronous motor. The inverter converts the DC power drawn from the \gls{pcc} into AC power supplied to the motor. The mechanical power required by the propulsion system varies across flight segments (e.g., take-off, climb, and cruise). However, these variations occur on a much slower time scale than the electrical dynamics of the \gls{epds}. Therefore, over the time scale relevant for the voltage regulation problem considered in this work, the demanded power can be treated as constant. Moreover, the \gls{epds} is to be controlled to regulate the DC voltage at the \gls{pcc}. Under such regulated-voltage operation, the inverter of the \gls{epu} can operate in current-controlled mode. Consequently, since the PCC voltage is maintained approximately constant, a constant power demand can be approximated as a constant current drawn from the DC network.}

\end{itemize}

{\color{black}
\begin{remark} \label{remark:charging} The assumption that the battery operates only in discharge mode does not undermine the motivation for adopting a hybrid-electric propulsion architecture. In a series-hybrid-electric propulsion system, the battery fulfills several important roles beyond supplying average power. In particular, it provides peak power support during high-demand phases such as takeoff and climb, delivers fast transient response due to its superior power dynamics compared to the gas-turbine-driven generator, enhances redundancy and system reliability, and enables generator downsizing, thereby improving overall efficiency and weight distribution \cite{Benjamin}. Hence, the primary motivation for hybridization lies in performance, efficiency, and reliability considerations, rather than in the necessity of bidirectional battery operation during every flight phase.
    
\end{remark}
}

\begin{figure}[h!]
 \centering
  \includegraphics[width=0.45\textwidth]{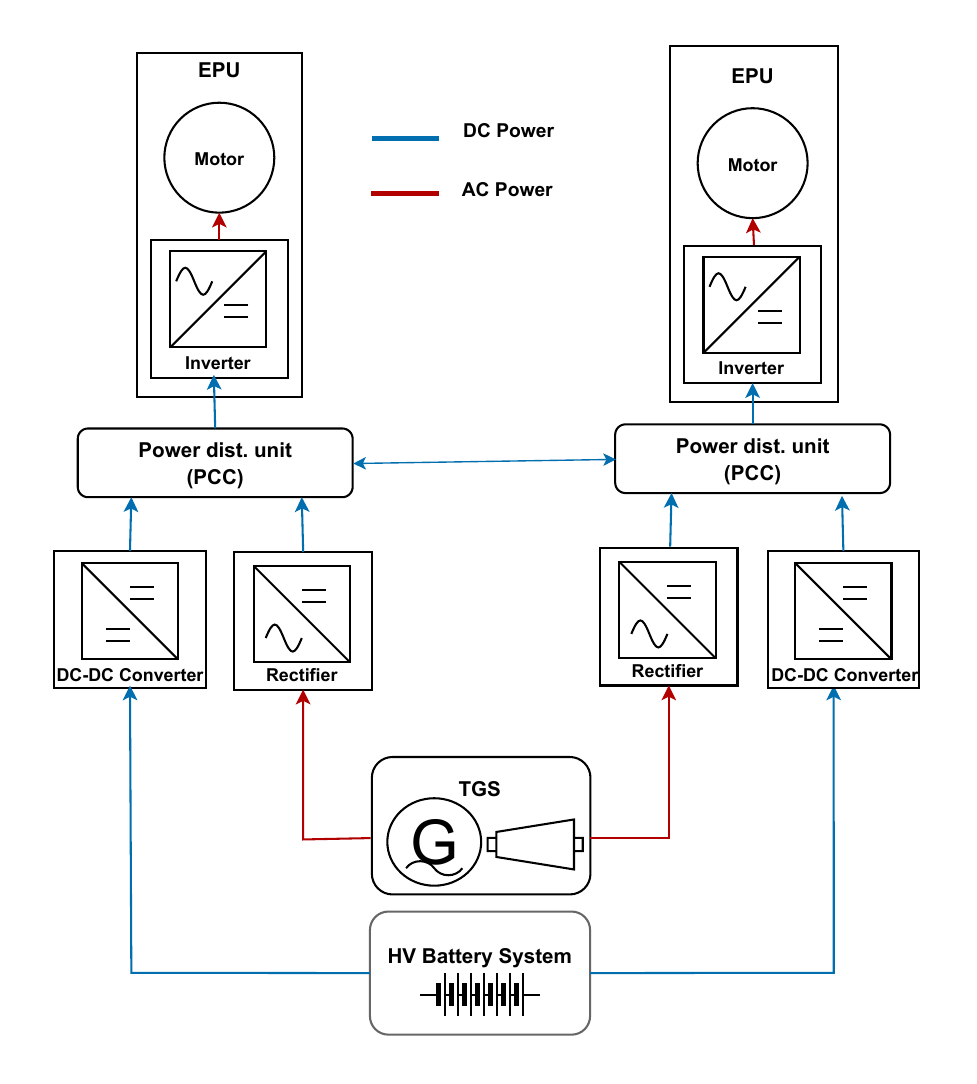}
 \caption{Exemplary series-hybrid-electric propulsion system.}
 \label{fig:GeneralHepArchitechture}
 \end{figure}

\section{MODEL, OBJECTIVES AND SOLUTION APPROACH} \label{Sec:modelAndPf}

\begin{figure*} 
\centering
\subfloat[Hybrid-electric propulsion system architecture. \label{fig:NominalHepArchitechture}]{\includegraphics[width=0.9\columnwidth]{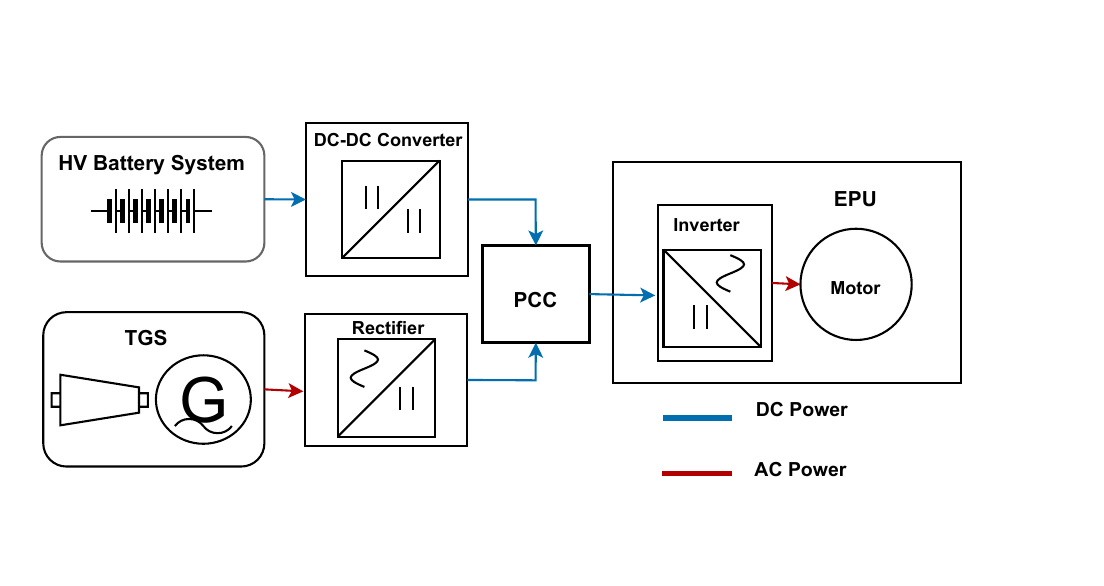}} 
\hfil
\subfloat[Electrical representation of a hybrid-electric propulsion system. \label{fig:DcMgHep}]{\includegraphics[width=0.9\columnwidth]{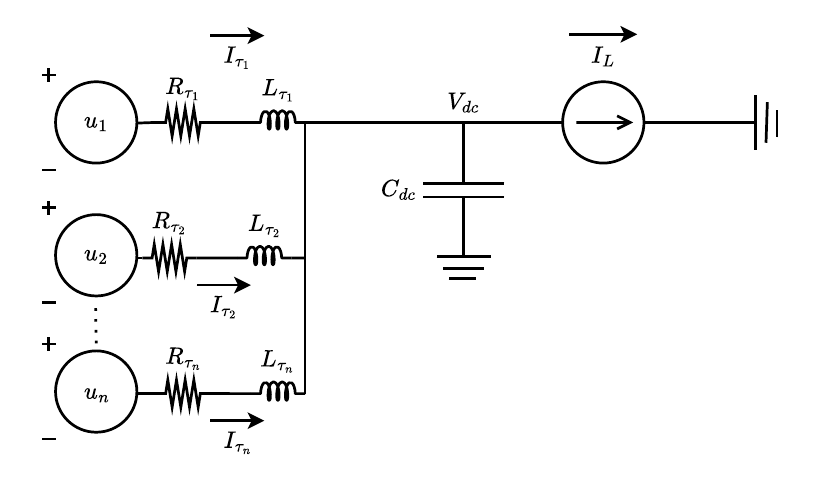}} 
\caption{Series-hybrid-electric propulsion system and its electrical representation.}
\label{fig:HepArchitectureAndSchematic}
\end{figure*}

Having described the system setup, in this section, we proceed to derive an ordinary differential equation-based system model. Furthermore, we outline the key control objectives and summarize the considered approach to address them.

\subsection{Modeling \gls{epds} of a Hybrid-Electric Propulsion System}

Let us consider a single lane of the propulsion system, as shown in Fig. \ref{fig:NominalHepArchitechture}. An equivalent electrical diagram of this lane is presented in Fig. \ref{fig:DcMgHep}. To account for the presence of multiple power sources, which are commonly used in certain aircraft propulsion systems (e.g., electric vertical take-off and landing (eVTOL)), we extend the propulsion architecture to a multi-source configuration to broaden the scope of this work. Since the \gls{epds} can be viewed as an onboard DC \gls{mg}, we adopt a modeling approach used in~\cite{Machado1,Noroozi2}. Accordingly, the \gls{epds} is represented as a connected graph \( \mathcal{G}(\mathcal{N}, \mathcal{E}) \), where the set of nodes \( \mathcal{N} = \{1, 2, \dots, n\} \) corresponds to electrical buses, and the set of edges \( \mathcal{E} = \{1, 2, \dots, n_\tau\} \) represents the power lines. Since we are considering that there is a unique load bus (from which the \gls{epu} is energized), we split the set of vertices $\mathcal{N}$ as $\mathcal{N}=\mathcal{N}_\mathrm{s}\cup \mathcal{N}_\ell$, where $\mathcal{N}_\mathrm{s}$ is comprised of the $n_\mathrm{s}> 1$ generation buses and $\mathcal{N}_\ell$ of the unique load bus; the number of power lines is denoted by $n_\tau$. 

Each \gls{dgu} (i.e., converter-interfaced generator or battery) is modeled as a controllable voltage source whose voltage $u_i(t)$ is treated as a control variable.  Each power line is modeled as a series connection of an inductor and a resistor \cite{Noroozi2}, \cite{Trip2}, \cite{Nahata}, \cite{Machado1}, \cite{Machado2}. The line inductance, resistance, and electric current are denoted by $L_{\tau,i}>0$,  $R_{\tau,i}>0$ and  $I_{\tau,i}(t)\in \mathbb{R}$, respectively. The load bus is modeled as a capacitor of capacitance $C_\mathrm{dc}>0$, and its voltage (w.r.t. ground) is denoted by $V_\mathrm{dc}(t)\in \mathbb{R}$. The \gls{epu}'s electrical load is denoted by $I_L(t) \in \mathbb{R}$.

Based on the above description, we use Kirchhoff's laws to obtain the dynamics of the current $I_{\tau,i}$ injected by the $i$-th \gls{dgu} as follows:

\begin{align}
	\begin{split}
		L_{\tau,i}	 \dot{I}_{\tau,i} &= -V_\mathrm{dc} - R_{\tau,i}I_{\tau,i} + u_i  \;  \;  \; \; \;  i= 1, 2, ..., n_\mathrm{s}. \\
  \label{eq:currentDynamics}
			\end{split}	 
\end{align}
Similarly, the dynamics of the voltage $V_\mathrm{dc}$ at the \gls{pcc} can be obtained as follows:
\begin{align}
	\begin{split}
		C_\mathrm{dc}	  \dot{V}_\mathrm{dc} &= \sum_{i=1}^{n_\mathrm{s}} I_{\tau,i} -I_L.\\	
  \label{eq:voltageDynamics}
	\end{split}	 
\end{align}
By combining \eqref{eq:currentDynamics} and \eqref{eq:voltageDynamics}, we obtain the model for the entire \gls{epds} in vector form as follows:
\begin{align}
	\begin{split}
		L_{\tau}	 \dot{I}_{\tau} &= -\boldsymbol{1}_{n_\mathrm{s}} V_\mathrm{dc} - R_{\tau}I_{\tau} + u, \\
		C_\mathrm{dc}	  \dot{V}_\mathrm{dc} &= \boldsymbol{1}_{n_\mathrm{s}}^\top I_{\tau} -I_L,\\	
		\label{eq:DcMgHepDiffEqn}
	\end{split}	 
\end{align}
where $L_{\tau}=\text{diag}(L_{\tau,i})$, $R_{\tau}=\text{diag}(R_{\tau,i})$, $I_{\tau}=\text{col}(I_{\tau,i})$, and $u=\text{col}(u_{i})$. 

{\color{black}
The following are modeling and measurement assumptions for the \gls{epds} model~\eqref{eq:DcMgHepDiffEqn}. 
}

\begin{assumption}\label{Assumption 1}  
~
\begin{enumerate}
	\item[\emph{(i)}] The internal dynamics of each DGU is an input-output stable subsystem capable of tracking desired reference signals. Consequently, the output DC voltage is our control signal $u_i$, for all $i\in \mathcal{N}_\mathrm{s}$ {\color{black} (c.f., \cite{BAI})}. 
 
	\item[\emph{(ii)}] For each DGU, the generated current $I_{\tau,i}$ is locally measured and available for control design purposes. Moreover, the voltage across the main DC bus (\gls{pcc}) $V_\mathrm{dc}$ is also measured, and its measurement is transmitted to each DGU. Thus, we define the output of each DGU as $y_i = \begin{bmatrix}I_{\tau,i} &  V_\mathrm{dc}\end{bmatrix}^\top$ {\color{black} (c.f., \cite{Guerrero, BAI, Nahata}).}

	\item[\emph{(iii)}] For each power line $i\in \mathcal{N}_\mathrm{s}$, the resistance $R_{\tau,i}>0$ is unknown, but constant (see Remark~\ref{remark:ContResistance}).

     \item[\emph{(iv)}] For each power line $i\in \mathcal{N}_\mathrm{s}$, the inductance $L_{\tau,i}>0$ is unknown, but belongs to the known open interval $]L_{\tau,i}^\text{min},L_{\tau,i}^\text{max}[$ (see Remark~\ref{remark:ContResistance}).

     \item[\emph{(v)}] The controlled \gls{epu} is modeled as a ZI-load, i.e., a parallel connection of a constant current load and a constant impedance, i.e., $I_L(V_{\mathrm{dc}}) = I_\ell + Y V_{\mathrm{dc}},$ where $I_\ell > 0$ and $Y > 0$ are {\color{black} unknown} but constant load parameters {\color{black} (c.f., \cite{Machado2}).} 
    
\end{enumerate}
\end{assumption}

{\color{black}
\begin{remark} \label{remark:ContResistance} The \gls{epds} model parameters are treated as constant. From a time-scale perspective, the closed-loop electrical dynamics evolve significantly faster than environmental variations caused by changes in altitude, pressure, and temperature. Therefore, these parameters can reasonably be assumed constant for control design purposes, which is standard practice in related literature; see, e.g.,~\cite{bastos,Ahmed, CheYanbo}. Moreover, although some parameters (e.g., line resistances) may vary with temperature, significant deviations from nominal values are not expected in hybrid-electric aircraft due to the use of thermal management systems that maintain components within prescribed operating ranges~\cite{Potamiti,Ouyang}. In addition, pressure variations associated with altitude changes are not expected to meaningfully affect electrical parameters, as sensitive components such as converters are typically placed in protected compartments designed to prevent environmental damage and humidity exposure~\cite{ASHRAE_Aircraft_2019}.

\end{remark}

\begin{remark}\label{remark:ZI_loadAssumption}
Assumption~\ref{Assumption 1}.(v) is consistent with the earlier discussion in the Section~\ref{Sec:SHEPS}  regarding the operation of the \gls{epu}. In particular, since the \gls{epds} is to be controlled to regulate the DC voltage at the \gls{pcc} and the propulsion power demand varies on a slower time scale than the electrical dynamics of the \gls{epds}, the inverter of the \gls{epu} can be operated in current-controlled mode. Consequently, the \gls{epu} can be modeled as drawing an approximately constant current from the DC network. In the model~\eqref{eq:DcMgHepDiffEqn}, this is represented as a constant current load $I_\ell$, while the internal losses of the \gls{epu} are captured by a shunt constant impedance ($1/Y$).
\end{remark}
}

\subsection{Objectives}

\noindent In this article, we pursue the following two objectives:
\begin{objective}[Voltage regulation] \label{obj:VoltReg}
\begin{equation}\label{eq:volt_reg}
\lim_{t \to \infty} \;  \; \; V_\mathrm{dc}(t) = V_\mathrm{dc}^*,    
\end{equation}
where $V_\mathrm{dc}^* \;  \in   \; \mathbb{R}$ is the desired voltage value at the PCC.

\end{objective}

\medskip

\begin{objective}[Current sharing] \label{obj:CurrShar}
\begin{equation}\label{eq:current_shar}
    \lim_{t \to \infty} \;  \; \; I_{\tau}(t) = \bar{I}_{\tau},
\end{equation}
where,
$$w_{i} \bar{I}_{\tau,i} = w_{j} \bar{I}_{\tau,j}, \;  \; \;  \; \text{ for all } \;  \; i, j \;  \; \epsilon \;  \; \mathcal{N}_\mathrm{s},$$ 
with $w_i>0$ being weights specified for each \gls{dgu}.  
\end{objective}

On the one hand, the achievement of Objective \ref{obj:VoltReg} will ensure the effective operation of the \gls{epu}, which has a specific operational voltage range, i.e., it is needed to accurately regulate the voltage at the \gls{pcc} to a prescribed nominal value. On the other hand, the achievement of Objective \ref{obj:CurrShar} will ensure that, at equilibrium, there exists a predefined proportionality relationship among the DGUs' injected currents, thereby contributing to preventing over-stressing them.

\bigskip

\subsection{Solution Approach}

\noindent
To summarize our approach, we introduce the assumption that there exists a connected communication graph $\mathcal{G}^\text{com}(\mathcal{N}_{\mathrm{s}}, \mathcal{E}^\text{com})$ among the \gls{dgu}s, with $\mathcal{N}_{\mathrm{s}}$ representing the set of all \gls{dgu}s and $\mathcal{E}^\text{com}$ the set of communication links among the \gls{dgu}s. Moreover, for any $i\in \mathcal{G}^\mathrm{com}$, we introduce the vector  $z_i$  grouping the output signals from all the \gls{dgu}s in $\mathcal{G}^\mathrm{com}$ which are adjacent to $i$.
{\color{black} Then, our approach is to design, for each input $u_i$, a dynamic and distributed feedback law of the form
\begin{subequations}\label{eq:control_generic_structure}
    \begin{align}
        u_i & = \gamma_i(y_i, z_i,\xi_i),\\
        \dot{\xi}_i & = \rho_i (y_i, z_i, \xi_i),
    \end{align}
\end{subequations}
where $\xi_i \in \mathbb{R}^{s_i}$ denotes the state of the $i$-th controller, $s_i$ is a natural number, and $\gamma_i$ and $\rho_i$ are continuous functions. The vector $y_i \in \mathbb{R}^2$ collects the measurable outputs of the $i$-th \gls{dgu}. The design of the controller is motivated by the fact that the model  \eqref{eq:DcMgHepDiffEqn} admits a port-Hamiltonian representation. This observation motivates the use of the Interconnection and Damping Assignment Passivity-Based Control (IDA-PBC) methodology with dynamic extension \cite{astolfi_ortega} for the controller synthesis (see Appendix for details). In addition, a backstepping-inspired change of variables is employed to introduce an integral action that compensates for unknown disturbances. Moreover, the feedback law \eqref{eq:control_generic_structure} aims to address parametric uncertainty. While the structural form of the \gls{epds} dynamics is assumed to be known, the exact parameter values need not be available for controller implementation. Instead, unknown parameters are compensated through the adaptive mechanism introduced in a subsequent section. Accordingly, the controller is designed such that the closed-loop dynamics obtained by interconnecting \eqref{eq:DcMgHepDiffEqn} with  \eqref{eq:control_generic_structure} admit a port-Hamiltonian structure with a stable equilibrium encoding the desired control objectives.
}

\section{A DISTRIBUTED CONTROL LAW FOR CURRENT SHARING AND VOLTAGE REGULATION} \label{Sec:Control}
In this section, we present a controller of the form \eqref{eq:control_generic_structure} tailored to meet Objectives~\ref{obj:VoltReg} and \ref{obj:CurrShar} for the model (\ref{eq:DcMgHepDiffEqn}) subject to Assumption~\ref{Assumption 1}. 
Subsequently, the achievement of Objectives 1 and 2 is implied from the proof that the trajectories of the resulting closed-loop system locally converge to an equilibrium point that encodes both voltage regulation and current sharing.

\subsection{Proposed Distributed Control Law} \label{subsec:PropsedControl}

Consider the system~\eqref{eq:DcMgHepDiffEqn}. Then, we propose to meet Objectives~\ref{obj:VoltReg} and \ref{obj:CurrShar} via the following distributed controller: 
\begin{equation}
\scalebox{0.8}{$
\begin{aligned}
T_{\varphi_{\tau,i}} \dot{\varphi}_{\tau,i} &= -(V_\mathrm{dc} - V^*_\mathrm{dc}) 
    - w_{i} \sum _{j \in \mathcal {N}^{com}_{i}}(\theta _{\tau,i} - \theta _{\tau,j}), \\
T_{\theta_{\tau,i}} \dot{\theta}_{\tau,i} &= \sum _{j \in \mathcal {N}^{com}_{i}}(w_{i} I_{\tau,i} - w_{j} I_{\tau,j}), \\
T_{\hat{r}_{\tau,i}} \dot{\hat{r}}_{\tau,i} &= -I_{\tau,i} (I_{\tau,i}-\varphi_{\tau,i}), \\
T_{\eta_{\tau,i}} \dot{\eta}_{\tau,i} &= -T^{-1}_{\varphi_{\tau,i}} 
    \left(-(V_\mathrm{dc} - V^*_\mathrm{dc}) - w_{i} \sum _{j \in \mathcal {N}^{com}_{i}}(\theta _{\tau,i} - \theta _{\tau,j})\right)(I_{\tau,i}-\varphi_{\tau,i}), \\
u_i &= -K_{z_{\tau,i}} (I_{\tau,i}-\varphi_{\tau,i}) + \hat{r}_{\tau,i} I_{\tau,i} + V^*_\mathrm{dc} \\
 &\quad + T^{-1}_{\varphi_{\tau,i}} 
    \left(-(V_\mathrm{dc} - V^*_\mathrm{dc}) - w_{i} \sum _{j \in \mathcal {N}^{com}_{i}}(\theta _{\tau,i} - \theta _{\tau,j}) \right) \eta_{\tau,i} \\
  &\quad - w_{i} \sum _{j \in \mathcal {N}^{com}_{i}}(\theta _{\tau,i} - \theta _{\tau,j}).
\end{aligned}$}
\label{eq:controllerInDistributed}
\end{equation}
where $K_{z_{\tau,i}}$, $T_{\theta_{\tau,i}}$, $T_{\varphi_{\tau,i}}$, $T_{\hat{r}_{\tau,i}}$, $T_{\eta_{\tau,i}}$ $\in \mathbb{R}$ are control parameters to tune the transient response, and $\xi_i= \text{col}(T_{\varphi_{\tau,i}}\varphi_{\tau,i}, T_{\theta_{\tau,i}}\theta_{\tau,i}, T_{\hat{r}_{\tau,i}}\hat{r}_{\tau,i}, T_{\eta_{\tau,i}}\eta_{\tau,i})$  are the controller's states. The set $\mathcal{N}^\text{com}_{i}$ is the set of \gls{dgu}s adjacent to the $i$-th \gls{dgu} within the communication network $\mathcal{G}^\text{com}$.

Note that for implementing the controller~\eqref{eq:controllerInDistributed}, it is necessary that the PCC's voltage measurement is communicated to all the \gls{dgu}s. Additionally,  measurements of the currents $I_{\tau,i}$ and the controller states $\theta_{\tau,i}$ need to be communicated among neighboring nodes within the communication network $\mathcal{G}^\text{com}$. This feature of the controller inherently makes it distributed in nature. 

{\color{black}A detailed explanation of the controller and the intuition behind its structure will be provided subsequently. At this stage, we find it convenient to represent the distributed controller in vector form for compactness and clarity.} The distributed controller \eqref{eq:controllerInDistributed} for all the \gls{dgu}s can be written in vector form as follows:
\begin{equation}
\scalebox{0.9}{$
\begin{aligned}
T_{\varphi_\tau} \dot{\varphi}_{\tau} &= -(V_\mathrm{dc} - V^*_\mathrm{dc})\boldsymbol{1}_{n_\mathrm{s}} - W \mathcal{L}^\text{com} \theta_{\tau}, \\
T_{\theta_\tau} \dot{\theta}_{\tau} &= \mathcal{L}^\text{com} W I_\tau, \\
T_{\hat{r}_\tau} \dot{\hat{r}}_{\tau} &= -\text{diag}(I_\tau) (I_\tau-\varphi_\tau), \\
T_{\eta_{\tau}} \dot{\eta}_{\tau} &= -\text{diag} \left( T^{-1}_{\varphi_\tau} (-(V_\mathrm{dc} - V^*_\mathrm{dc})\boldsymbol{1}_{n_\mathrm{s}} - W \mathcal{L}^\text{com} \theta_{\tau}) \right)  (I_\tau-\varphi_\tau), \\
u &= -K_{z_\tau} (I_\tau-\varphi_\tau) + \text{diag}(\hat{r}_{\tau}) I_\tau + \boldsymbol{1}_{n_\mathrm{s}}V^*_\mathrm{dc} \\
 &\quad + \text{diag} \left( T^{-1}_{\varphi_\tau} (-(V_\mathrm{dc} - V^*_\mathrm{dc})\boldsymbol{1}_{n_\mathrm{s}} - W \mathcal{L}^\text{com} \theta_{\tau}) \right) \eta_{\tau} \\
  &\quad -W \mathcal{L}^\text{com} \theta_{\tau},
\end{aligned}$}
\label{eq:controller}
\end{equation}
{\color{black} where $K_{z_\tau} = \text{diag}(K_{z_{\tau,i}})$ $\in \mathbb{R}^{n_\mathrm{s} \times n_\mathrm{s}}$, $T_{\theta_\tau} = \text{diag}(T_{\theta_{\tau,i}})$ $\in \mathbb{R}^{n_\mathrm{s} \times n_\mathrm{s}}$, $T_{\varphi_\tau} = \text{diag}(T_{\varphi_{\tau,i}})$ $\in \mathbb{R}^{n_\mathrm{s} \times n_\mathrm{s}}$, $T_{\hat{r}_\tau} = \text{diag}(T_{\hat{r}_{\tau,i}})$ $\in \mathbb{R}^{n_\mathrm{s} \times n_\mathrm{s}}$, $T_{\eta_\tau} = \text{diag}(T_{\eta_{\tau,i}})$ $\in \mathbb{R}^{n_\mathrm{s} \times n_\mathrm{s}}$ are positive-definite diagonal matrices of control parameters, and $\mathcal{L}^\text{com}\in \mathbb{R}^{n_\mathrm{s}\times n_\mathrm{s}}$ is the Laplacian matrix of the communication graph $\mathcal{G}^\mathrm{com}$. Additionally, $W=\text{diag}(w_i)$ $\in \mathbb{R}^{n_\mathrm{s} \times n_\mathrm{s}}$ denotes a diagonal weighting matrix.}

{\color{black} The controller structure incorporates the additional state variables ($\varphi_\tau$, $\theta_\tau$, $\hat{r}_\tau$, $\eta_\tau$) to achieve the desired objectives. First, the dynamic extension adopted from \cite{Nahata}, introduced through the variable $\varphi_\tau$, plays a role analogous to the integral action used in classical output regulation. In particular, it accumulates the voltage regulation error and ensures that, at steady-state, $V_{\mathrm{dc}} = V^*_{\mathrm{dc}}$ (Objective~\ref{obj:VoltReg}), thereby eliminating steady-state voltage deviation. Then, the consensus-driven term $\mathcal{L}^\mathrm{com} W I_\tau$, adopted from \cite{Nahata, Trip2}, is included in the $\dot{\theta}_\tau$ dynamics to enforce weighted proportional current sharing among the DGUs (Objective~\ref{obj:CurrShar}). Then, the variables $\hat{r}_\tau$ and $\eta_\tau$ are introduced to compensate for the uncertainty in the resistances and inductances of the lines. The controller does not require prior knowledge of these parameters; instead, adaptation mechanisms are employed to ensure voltage regulation and proportional current-sharing despite parametric uncertainty. In particular, $\hat{r}_\tau$ is introduced as an estimate of the unknown line resistances that converge to the actual line resistance at steady-state, and its update law follows the gradient-based adaptive structure \cite{slotine}. Some additional benefits of resistance estimation are highlighted in the Remark~\ref{remark:Res_est_ben}. 
In contrast, $\eta_\tau$ does not necessarily converge to the true line inductance. Rather, it evolves toward a bounded constant value that is sufficient to guarantee the desired closed-loop behavior. The role of $\eta_\tau$ is not parameter identification but dynamic compensation of uncertain inductive effects while preserving the passivity structure of the closed-loop system, which is essential for ensuring closed-loop stability. As shown in Section~\ref{subsec:EqAnalysis}, steady-state objective satisfaction is achieved without requiring convergence of $\eta_\tau$ to the physical inductance, while Section~\ref{subsec:stabilityAnalysis} establishes that such convergence is likewise unnecessary for stability.

Finally, the control law for  $u$ is designed such that the desired closed-loop equilibrium is stable.  For that, we have followed the nonlinear control design methodology known as Interconnection and Damping Assignment Passivity-Based Control (IDA-PBC) with dynamic extension \cite{astolfi_ortega}. This controller renders the plant's closed-loop dynamics a port-Hamiltonian system \cite{arjan_book_port_Ham} with a stable or asymptotically stable equilibrium. Details of the methodology are provided in the appendix. For further reading on port-Hamiltonian systems and IDA-PBC design, interested readers are referred to \cite{arjan_book_l2S_3ed} and \cite{canseco_ortega_IDA_PBC}.
}

\bigskip

{\color{black}
\begin{remark} \label{remark:Res_est_ben} Knowledge of the estimated resistances provides additional benefits beyond improving voltage regulation and current-sharing performance. In particular, they enable the computation of resistive power losses, which can support supervisory monitoring of network efficiency and operating conditions. Moreover, deviations of the estimates from nominal values may indicate abnormal conditions such as line degradation, loose connections, or developing faults. Early detection of such anomalies is particularly important in aircraft applications due to stringent reliability and safety requirements. Finally, the availability of these estimates can facilitate future extensions toward optimization-based power flow coordination that explicitly accounts for network losses.
    
\end{remark}
}

\subsection{Closed-loop Dynamics}

\noindent Moving forward, we introduce a new variable $z_\tau$ as follows:   

\begin{align}
	z_{\tau} := {I}_{\tau} - \varphi_{\tau},
	\label{eq:NewVariable}
\end{align}
where $\varphi_{\tau}$ evolves according to \eqref{eq:controller}. {\color{black} This change of variable—motivated by backstepping control design \cite{khalil2}—introduces an integral action in the $V_\mathrm{dc}$ dynamics, which is essential for compensating the unknown constant disturbance $I_\ell$. The change of variable is then employed in the subsequent proposition to derive an equivalent representation of the closed-loop system obtained by combining \eqref{eq:DcMgHepDiffEqn} and \eqref{eq:controller}.}  

\begin{proposition}
Consider the aircraft EPDS model \eqref{eq:DcMgHepDiffEqn}, subject to Assumption 1 and in closed-loop with the distributed controller \eqref{eq:controller}. Consider the change of variables \eqref{eq:NewVariable} and define the following vectors:
\begin{subequations}
    \begin{align}
         r_{\tau} &  := R_{\tau}\boldsymbol{1}_{n_\mathrm{s}},\label{eq:definition_r_tau}\\
          l_{\tau} & := L_{\tau}\boldsymbol{1}_{n_\mathrm{s}}.
          \label{eq:definition_l_tau}
    \end{align}
\end{subequations}
Then, the closed-loop system can be equivalently represented as follows:
\begin{equation}
\scalebox{0.9}{$
\begin{aligned}
L_{\tau} \dot{z}_{\tau} &= -K_{z_\tau} z_{\tau} + \text{diag} (\varphi_{\tau} + z_{\tau}) (\hat{r}_{\tau}-r_{\tau}) \\
		& \; \; \; -(V_\mathrm{dc} - V^*_\mathrm{dc})\boldsymbol{1}_{n_\mathrm{s}} - W \mathcal{L}^\text{com} \theta_{\tau} \\
         & \; \; \;  + \text{diag} \left( T^{-1}_{\varphi_\tau} (-(V_\mathrm{dc} - V^*_\mathrm{dc})\boldsymbol{1}_{n_\mathrm{s}} - W  \mathcal{L}^\text{com} \theta_{\tau}) \right) (\eta_{\tau} - l_{\tau}), \\
		C_\mathrm{dc} \dot{V}_\mathrm{dc} &= \boldsymbol{1}_{n_\mathrm{s}}^\top (z_{\tau} + \varphi_{\tau}) -I_{\ell} -  Y V_\mathrm{dc},\\
		T_{\varphi_\tau} \dot{\varphi}_{\tau} &= -(V_\mathrm{dc} - V^*_\mathrm{dc})\boldsymbol{1}_{n_\mathrm{s}} - W \mathcal{L}^\text{com} \theta_{\tau}, \\
		T_{\theta_\tau} \dot{\theta}_{\tau} &= \mathcal{L}^\text{com} W (z_{\tau} + \varphi_{\tau}), \\
		T_{\hat{r}_\tau} \dot{\hat{r}}_{\tau} &= -\text{diag} (\varphi_{\tau} + z_{\tau})z_{\tau} \\
        T_{\eta_{\tau}} \dot{\eta}_{\tau} &= -\text{diag} \left( T^{-1}_{\varphi_\tau} (-(V_\mathrm{dc} - V^*_\mathrm{dc})\boldsymbol{1}_{n_\mathrm{s}} - W \mathcal{L}^\text{com} \theta_{\tau}) \right)  z_{\tau}.
\end{aligned}$}
\label{eq:closed-loop}
\end{equation}

\end{proposition}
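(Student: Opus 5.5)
The proof is a direct substitution argument: each line of \eqref{eq:closed-loop} should come from the plant \eqref{eq:DcMgHepDiffEqn}, the controller \eqref{eq:controller}, and the change of variables \eqref{eq:NewVariable}. To lighten notation, write
\[
\sigma := T^{-1}_{\varphi_\tau}\bigl(-(V_\mathrm{dc}-V^*_\mathrm{dc})\boldsymbol{1}_{n_\mathrm{s}} - W\mathcal{L}^\text{com}\theta_\tau\bigr),
\]
so that $\dot\varphi_\tau=\sigma$.

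\textbf{Controller-state equations.} The dynamics of $\varphi_\tau$ are copied unchanged from \eqref{eq:controller}. For $\theta_\tau$, $\hat r_\tau$ and $\eta_\tau$, substitute $I_\tau = z_\tau+\varphi_\tau$ and $I_\tau-\varphi_\tau = z_\tau$. This uses $\text{diag}(I_\tau)=\text{diag}(\varphi_\tau+z_\tau)$ and the definition of $\sigma$.

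\textbf{Voltage equation.} Insert the ZI-load model of Assumption~\ref{Assumption 1}.(v), $I_L = I_\ell + YV_\mathrm{dc}$, into the second line of \eqref{eq:DcMgHepDiffEqn}, and substitute $I_\tau=z_\tau+\varphi_\tau$.

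\textbf{The $z_\tau$ equation.} This is the only computation of any substance. Differentiate \eqref{eq:NewVariable} and multiply by $L_\tau$ to get $L_\tau\dot z_\tau = L_\tau\dot I_\tau - L_\tau\sigma$. Replace $L_\tau\dot I_\tau$ by the right-hand side of \eqref{eq:DcMgHepDiffEqn} and insert $u$ from \eqref{eq:controller}, grouping terms as follows:
\begin{itemize}
\item $-\boldsymbol{1}_{n_\mathrm{s}}V_\mathrm{dc}+\boldsymbol{1}_{n_\mathrm{s}}V^*_\mathrm{dc} = -(V_\mathrm{dc}-V^*_\mathrm{dc})\boldsymbol{1}_{n_\mathrm{s}}$;
\item $-R_\tau I_\tau + \text{diag}(\hat r_\tau)I_\tau = \text{diag}(I_\tau)(\hat r_\tau - r_\tau)$, using that diagonal matrices commute with diagonal action, $R_\tau I_\tau = \text{diag}(I_\tau)r_\tau$ by \eqref{eq:definition_r_tau}, and then $I_\tau=\varphi_\tau+z_\tau$;
\item $\text{diag}(\sigma)\eta_\tau - L_\tau\sigma = \text{diag}(\sigma)(\eta_\tau - l_\tau)$, using $L_\tau\sigma=\text{diag}(\sigma)l_\tau$ from \eqref{eq:definition_l_tau};
\item the remaining terms $-K_{z_\tau}z_\tau$ and $-W\mathcal{L}^\text{com}\theta_\tau$ appear unchanged.
\end{itemize}
Collecting these gives the first line of \eqref{eq:closed-loop}.

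The only step needing care is the vector identity $\text{diag}(a)b=\text{diag}(b)a$, used to move the unknown $R_\tau$ and $L_\tau$ into the parameter-error vectors $\hat r_\tau - r_\tau$ and $\eta_\tau - l_\tau$. The change of variables is invertible, with $I_\tau = z_\tau+\varphi_\tau$, so the two representations are equivalent.
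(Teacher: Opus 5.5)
Your proposal is correct and follows essentially the same route as the paper. You differentiate $z_\tau$, substitute the plant dynamics and $u$, use the identities $R_\tau I_\tau=\text{diag}(I_\tau)r_\tau$ and $L_\tau\sigma=\text{diag}(\sigma)l_\tau$ to form the parameter-error terms, and obtain the remaining equations by propagating $I_\tau=z_\tau+\varphi_\tau$ (your explicit insertion of $I_L=I_\ell+YV_\mathrm{dc}$ is a step the paper leaves implicit).
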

\begin{proof}
In view of  \eqref{eq:NewVariable}  the time-derivative of $z_\tau$ satisfies:
\begin{align}
	\begin{split}
		L_{\tau} \dot{z}_{\tau} &= L_{\tau} \dot{I}_{\tau} - L_{\tau} \dot{\varphi}_{\tau}.
	\end{split}
\label{eq:Lz1}
\end{align}
By substituting the expression for $L_{\tau} \dot{I}_{\tau}$ from (\ref{eq:DcMgHepDiffEqn}) and $\dot{\varphi}_{\tau}$ from (\ref{eq:controller}) in (\ref{eq:Lz1}), we get:
\begin{align}
	\begin{split}
		L_{\tau} \dot{z}_{\tau} &= (-\boldsymbol{1}_{n_\mathrm{s}} V_\mathrm{dc} - R_{\tau}I_{\tau} + u) - \\
		& \; \; \; L_{\tau}T_{\varphi_\tau}^{-1}(-(V_\mathrm{dc} - V^*_\mathrm{dc})\boldsymbol{1}_{n_\mathrm{s}} - W \mathcal{L}^\text{com} \theta_{\tau}).
	\end{split}
	\label{eq:Lz2}
\end{align}
By substituting $u$ from (\ref{eq:controller}) in (\ref{eq:Lz2}) we obtain:
\begin{align}
\begin{split}
		L_{\tau} \dot{z}_{\tau} &= -\boldsymbol{1}_{n_\mathrm{s}}(V_\mathrm{dc} - V_\mathrm{dc}^*) -  R_{\tau}(z_{\tau} + \varphi_{\tau}) \\
		& \; \; \; -K_{z_\tau} z_{\tau} + \text{diag}(\hat{r}_{\tau}) (z_{\tau} + \varphi_{\tau}) -W \mathcal{L}^\text{com} \theta_{\tau} \\ 
		& \; \; \;  + \text{diag} \left( T^{-1}_{\varphi_\tau} (-(V_\mathrm{dc} - V^*_\mathrm{dc})\boldsymbol{1}_{n_\mathrm{s}} - W \mathcal{L}^\text{com} \theta_{\tau}) \right) \eta_{\tau} \\
        & \; \; \; -  L_{\tau} \left( T^{-1}_{\varphi_\tau} (-(V_\mathrm{dc} - V^*_\mathrm{dc})\boldsymbol{1}_{n_\mathrm{s}} - W \mathcal{L}^\text{com} \theta_{\tau}) \right).
	\end{split}
	\label{eq:Lz3}
\end{align}
By considering the identities
\begin{subequations}
    \begin{align}
        &   (\text{diag}(\hat{r}_{\tau})-R_{\tau})(z_{\tau} + \varphi_{\tau}) = \text{diag}(z_{\tau} + \varphi_{\tau})(\hat{r}_{\tau}-R_{\tau}\boldsymbol{1}_{n_\mathrm{s}}),\\
& \text{diag} \left(T^{-1}_{\varphi_\tau} \left( -(V_\mathrm{dc} - V^*_\mathrm{dc})\boldsymbol{1}_{n_\mathrm{s}} - W \mathcal{L}^\text{com} \theta_{\tau} \right) \right) {l}_{\tau} \nonumber \\
&= L_{\tau} T^{-1}_{\varphi_\tau} \left( -(V_\mathrm{dc} - V^*_\mathrm{dc})\boldsymbol{1}_{n_\mathrm{s}} - W \mathcal{L}^\text{com} \theta_{\tau} \right), \label{eq:l_Identity}
    \end{align}
\end{subequations}
then \eqref{eq:Lz3} can be equivalently written as: 
\begin{equation}
\scalebox{0.9}{$
\begin{aligned}
L_{\tau} \dot{z}_{\tau} &= -K_{z_\tau} z_{\tau} + \text{diag} (\varphi_{\tau} + z_{\tau}) (\hat{r}_{\tau}-r_{\tau}) \\
		& \; \; \; -(V_\mathrm{dc} - V^*_\mathrm{dc})\boldsymbol{1}_{n_\mathrm{s}} - W \mathcal{L}^\text{com} \theta_{\tau} \\
         & \; \; \; + \text{diag} \left( T^{-1}_{\varphi_\tau} (-(V_\mathrm{dc} - V^*_\mathrm{dc})\boldsymbol{1}_{n_\mathrm{s}} - W  \mathcal{L}^\text{com} \theta_{\tau}) \right) (\eta_{\tau} - l_{\tau}). \\
\end{aligned}$}
\label{eq:closed-loop-proved}
\end{equation}
By incorporating the $V_\mathrm{dc}$-dynamics, $z_{\tau}$-dynamics, and the controller from the model \eqref{eq:DcMgHepDiffEqn}, \eqref{eq:closed-loop-proved}, and \eqref{eq:controller}, respectively, and propagating the change of variable \eqref{eq:NewVariable}, we can establish that the closed-loop system is equivalent to \eqref{eq:closed-loop}. $\hfill\blacksquare$
\end{proof} 

{\color{black}
\begin{remark}\label{rem:invariant_set_theta}
Consider the closed-loop system \eqref{eq:closed-loop}. Multiplying both sides of the $\theta_{\tau}$-dynamics  from the left by $\boldsymbol{1}_{n_\mathrm{s}}^\top$ we get that $\boldsymbol{1}_{n_\mathrm{s}}^\top T_{\theta_\tau}\dot{\theta}_\tau=0$ for all time. Then, the  solutions of \eqref{eq:closed-loop} evolve  in the following invariant set:
\begin{equation}
\scalebox{1.0}{$
\begin{aligned}
\mathcal{I}=\{(z_\tau,V_\mathrm{dc},\varphi_\tau,\theta_\tau,\hat{r}_\tau,\eta_{\tau}) \in \mathbb{R}^{5n_\mathrm{s}+1}:~\boldsymbol{1}_{n_\mathrm{s}}^\top T_{\theta_\tau}\theta_\tau =c\},  
\end{aligned}$}
\label{eq:invariant_set_theta}
\end{equation}
\noindent where $c=\boldsymbol{1}_{n_\mathrm{s}}^\top T_{\theta_\tau}\theta_\tau(0)$. This set is a hyperplane in the full state space that constrains only the weighted sum of $\theta_\tau$, while all remaining state variables are unconstrained. Consequently, the weighted average of $\theta_\tau$ is preserved by the closed-loop dynamics, and the trajectories remain in the corresponding level set for all time. This fact is key in showing later the convergence to an equilibrium. Importantly, this dependence on $\theta_\tau(0)$ does not affect the control objectives. As will be shown later, the equilibrium values of $V_{\mathrm{dc}}$ and $I_\tau$, which encode voltage regulation and proportional current sharing, are independent of $\theta_\tau(0)$. The initial condition only determines a constant offset of $\theta_\tau$ within the invariant set and does not restrict feasibility nor degrade closed-loop performance. 
\end{remark}
}

\subsection{Closed-loop System Equilibria Analysis} \label{subsec:EqAnalysis}

The following proposition establishes that, for any given $\theta_\tau(0)$, the closed-loop system~\eqref{eq:closed-loop}  admits an equilibrium set apt to meet  Objectives~\ref{obj:VoltReg} and \ref{obj:CurrShar}. Before proceeding further, we find it convenient to represent the closed-loop system's state vector in the following form:
\begin{equation}\label{eq:state_vector}
    x=(L_\tau z_\tau,C_\mathrm{dc}V_\mathrm{dc},T_{\varphi_\tau}\varphi_\tau,T_{\theta_\tau}\theta_\tau,T_{\hat{r}_\tau}\hat{r}_\tau,T_{\eta_{\tau}}\eta_{\tau}).
\end{equation}

\begin{proposition} \label{prop:eqAchieveObj}
Subject to Assumption~\ref{Assumption 1}, for any given $\theta_{\tau}(0) \in \mathbb{R}^{n_\mathrm{s}}$, the closed-loop dynamics~\eqref{eq:closed-loop} admits a non-empty equilibrium set $\mathcal{E}_{\mathrm{eq}}$, every point   of which is of the form
\begin{equation}
(\boldsymbol{0}_{n_\mathrm{s}},C_\mathrm{dc}V_\mathrm{dc}^*,\alpha T_{\varphi_\tau} W^{-1}\boldsymbol{1}_{n_\mathrm{s}},\beta T_{\theta_\tau} \boldsymbol{1}_{n_\mathrm{s}},T_{\hat{r}_\tau}r_\tau, T_{\eta_{\tau}} \bar{\eta}_{\tau}),
\label{eq:general_equilibrium}
\end{equation}
where,
\begin{equation}\label{eq:values_alpha_beta}
        \alpha  = \frac{I_\ell +Y{V}_{\mathrm{dc}}^*}{\boldsymbol{1}_{n_\mathrm{s}}^\top W^{-1}\boldsymbol{1}_{n_\mathrm{s}}},~\quad\\
        \beta  =\frac{\boldsymbol{1}_{n_\mathrm{s}}^\top T_{\theta_\tau} \theta_{\tau}(0)}{\boldsymbol{1}_{n_\mathrm{s}}^\top T_{\theta_\tau}\boldsymbol{1}_{n_\mathrm{s}}},
\end{equation}
and $\bar{\eta}_{\tau} \in \mathbb{R}^{n_\mathrm{s}}$ is any constant vector.
Moreover, the vector of equilibrium currents, $\bar I_\tau$,  is such that
\begin{equation}\label{subeq:ssCurrentSharing}
    w_{i} \bar{I}_{\tau,i} = w_{j} \bar{I}_{\tau,j}, \;  \; \;  \; \text{ for all } \;  \; i, j \;  \; \epsilon \;  \; \mathcal{N}_\mathrm{s}.
\end{equation}
\end{proposition}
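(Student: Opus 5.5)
We set every time derivative in the closed-loop system \eqref{eq:closed-loop} to zero and characterize the solutions inside the invariant set $\mathcal{I}$ of Remark~\ref{rem:invariant_set_theta}. First consider the $\varphi_\tau$-equation: at equilibrium, $(V_\mathrm{dc}-V^*_\mathrm{dc})\boldsymbol{1}_{n_\mathrm{s}} + W\mathcal{L}^\text{com}\theta_\tau = \boldsymbol{0}_{n_\mathrm{s}}$. Multiply this from the left by $\boldsymbol{1}_{n_\mathrm{s}}^\top W^{-1}$. Since $\boldsymbol{1}_{n_\mathrm{s}}^\top\mathcal{L}^\text{com}=\boldsymbol{0}^\top$ (the graph is undirected and connected, so $\mathcal{L}^\text{com}$ is symmetric), this gives $(V_\mathrm{dc}-V^*_\mathrm{dc})\,\boldsymbol{1}_{n_\mathrm{s}}^\top W^{-1}\boldsymbol{1}_{n_\mathrm{s}}=0$. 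Hence $\bar V_\mathrm{dc}=V^*_\mathrm{dc}$, and then $W\mathcal{L}^\text{com}\theta_\tau=0$. Because $W$ is invertible and $\ker\mathcal{L}^\text{com}=\mathrm{span}\{\boldsymbol{1}_{n_\mathrm{s}}\}$, we get $\bar\theta_\tau=\beta\boldsymbol{1}_{n_\mathrm{s}}$. The scalar $\beta$ is fixed by the conserved quantity $\boldsymbol{1}^\top T_{\theta_\tau}\theta_\tau=\boldsymbol{1}^\top T_{\theta_\tau}\theta_\tau(0)$, which yields the expression in \eqref{eq:values_alpha_beta}.

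With $\bar V_\mathrm{dc}=V^*_\mathrm{dc}$ and $\mathcal{L}^\text{com}\bar\theta_\tau=0$, the common bracket $T^{-1}_{\varphi_\tau}(-(V_\mathrm{dc}-V^*_\mathrm{dc})\boldsymbol{1}-W\mathcal{L}^\text{com}\theta_\tau)$ vanishes. As a result, the $\eta_\tau$-equation holds identically, so $\bar\eta_\tau$ is an arbitrary constant, and the last term of the $z_\tau$-equation drops out. Next, the $\theta_\tau$-equation gives $\mathcal{L}^\text{com}W(\bar z_\tau+\bar\varphi_\tau)=0$. The same kernel argument then gives $W\bar I_\tau=\alpha\boldsymbol{1}_{n_\mathrm{s}}$ for some scalar $\alpha$, i.e. $\bar I_\tau=\alpha W^{-1}\boldsymbol{1}_{n_\mathrm{s}}$, which is exactly the sharing relation \eqref{subeq:ssCurrentSharing}. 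The $V_\mathrm{dc}$-equation reads $\boldsymbol{1}^\top\bar I_\tau=I_\ell+YV^*_\mathrm{dc}$, and this fixes $\alpha$ as in \eqref{eq:values_alpha_beta}. Note that $\alpha>0$ by Assumption~\ref{Assumption 1}(v).

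It remains to show $\bar z_\tau=0$ and $\bar{\hat r}_\tau=r_\tau$; this is the step requiring care. The $\hat r_\tau$-equation gives $\text{diag}(\bar I_\tau)\bar z_\tau=0$. Every entry of $\bar I_\tau=\alpha W^{-1}\boldsymbol{1}$ is nonzero because $\alpha>0$ and $w_i>0$, so $\bar z_\tau=0$. Substituting into the $z_\tau$-equation leaves $\text{diag}(\bar I_\tau)(\bar{\hat r}_\tau-r_\tau)=0$, and the same nonvanishing argument gives $\bar{\hat r}_\tau=r_\tau$. Consequently $\bar\varphi_\tau=\bar I_\tau=\alpha W^{-1}\boldsymbol{1}_{n_\mathrm{s}}$.

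In the coordinates \eqref{eq:state_vector}, these values give exactly \eqref{eq:general_equilibrium}. Conversely, every point of that form satisfies all equations of \eqref{eq:closed-loop} by direct substitution, so $\mathcal{E}_{\mathrm{eq}}$ is nonempty and consists precisely of these points. The main obstacle is the ordering: the equilibrium value of $V_\mathrm{dc}$ must be established before using the $\hat r_\tau$- and $z_\tau$-equations, and the nonvanishing of the currents must be obtained from $\alpha>0$ before dividing out by $\text{diag}(\bar I_\tau)$.
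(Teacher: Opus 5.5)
Your proof is correct and follows essentially the same route as the paper. You multiply the $\dot\varphi_\tau=0$ condition by $\boldsymbol{1}_{n_\mathrm{s}}^\top W^{-1}$ to get $\bar V_\mathrm{dc}=V^*_\mathrm{dc}$, use $\ker\mathcal{L}^\text{com}=\mathrm{span}\{\boldsymbol{1}_{n_\mathrm{s}}\}$ with the $V_\mathrm{dc}$-equation to get $W\bar I_\tau=\alpha\boldsymbol{1}_{n_\mathrm{s}}$, use nonvanishing currents to get $\bar z_\tau=\boldsymbol{0}_{n_\mathrm{s}}$ and then $\bar{\hat r}_\tau=r_\tau$, and fix $\beta$ through the invariant set $\mathcal{I}$. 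The differences are only in ordering (you fix $\bar\theta_\tau$ earlier) and in explicitly checking the converse inclusion; exactly as in the paper, your claim $\alpha>0$ tacitly needs $I_\ell+YV^*_\mathrm{dc}>0$ (e.g.\ $V^*_\mathrm{dc}>0$), which Assumption~1.(v) alone does not supply.
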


\begin{proof}
Consider \eqref{eq:closed-loop} at steady-state.
The equilibrium condition $\dot{\varphi}_{\tau} = 0$ holds if and only if $-(\bar V_\mathrm{dc}-V_\mathrm{dc}^*)\boldsymbol{1}_{n_\mathrm{s}}-W\mathcal{L}^\mathrm{com}\bar \theta_\tau=0$. By multiplying both sides of this equation by $\boldsymbol{1}_{n_\mathrm{s}}W^{-1}$ we get that $(\bar{V}_\mathrm{dc}-V_\mathrm{dc}^*)\boldsymbol{1}_{n_\mathrm{s}}^\top W^{-1}\boldsymbol{1}_{n_\mathrm{s}}=0$. Since $W$ is a diagonal, positive-definite matrix, it follows that $\bar V_\mathrm{dc}=V_\mathrm{dc}^*$.

Now,  $\dot{\theta}_{\tau} = 0$ holds if and only if 
\begin{equation}\label{eq:proof_ss_1}
    W (\bar{z}_\tau+\bar{\varphi}_\tau) = a \boldsymbol{1}_{n_\mathrm{s}},
\end{equation}
for some $a\in \mathbb{R}$. To verify that $a=\alpha$, with $\alpha$ defined in \eqref{eq:values_alpha_beta}, we use the equilibrium condition $\dot{V}_\mathrm{dc}=0$, from which we get that
\begin{equation}\label{eq:proof_ss_2}
    \boldsymbol{1}_{n_\mathrm{s}}^\top (\bar{z}_\tau+\bar{\varphi}_\tau)  = I_\ell+Y\bar{V}_\mathrm{dc}.
    \end{equation}
Clearing $\bar{z}_\tau+\bar{\varphi}_\tau$ from \eqref{eq:proof_ss_1} and substituting it into \eqref{eq:proof_ss_2} results in
\begin{align}\label{eq:proof_ss_2.1}
    \boldsymbol{1}_{n_\mathrm{s}}^\top \left(a W^{-1}\boldsymbol{1}_{n_\mathrm{s}} \right) & = I_\ell+Y\bar{V}_\mathrm{dc} \nonumber\\
    & \Leftrightarrow \nonumber\\
    a & =\frac{I_\ell +Y\bar{V}_{\mathrm{dc}}}{\boldsymbol{1}_{n_\mathrm{s}}^\top W^{-1}\boldsymbol{1}_{n_\mathrm{s}}}>0,
\end{align}
we note that $a>0$ due to Assumption~\ref{Assumption 1}.(v).

Observe now from \eqref{eq:proof_ss_1} and \eqref{eq:proof_ss_2.1} that $\bar{z}_{\tau,i}+\bar{\varphi}_{\tau,i}>0$ for all $i\in \mathcal{N}_\mathrm{s}$. Then, the equilibrium condition $\dot{\hat{r}}_{\tau}=0$ in \eqref{eq:closed-loop} holds if and only if $\bar{z}_\tau=\boldsymbol{0}_{n_\mathrm{s}}$. Having established this, we can see from \eqref{eq:proof_ss_1} that $\bar{\varphi}_\tau$ satisfies  \eqref{eq:general_equilibrium}. 

Now let us verify  that $\bar{\hat{r}}_\tau=r_\tau$. Simultaneously considering the equilibrium conditions $\dot{z}_\tau=0$ and $\dot{\varphi}_\tau=0$ it is possible to obtain the following identity:
\begin{equation}\label{eq:proof_ss_3}
    0 = -K_{z_\tau}\bar{z}_\tau + \text{diag}(\bar{z}_\tau +\bar{\varphi}_\tau)(\bar{\hat{r}}_\tau-r_\tau).
\end{equation}
Since $\bar{z}_\tau=\boldsymbol{0}_{n_\mathrm{s}}$ and $\bar{\varphi}_{\tau,i}>0$ for all $i\in \mathcal{N}_\mathrm{s}$, then \eqref{eq:proof_ss_3} holds if and only if $\bar{\hat{r}}_\tau = r_\tau$.

Moving forward, consider again the equilibrium condition $\dot{\varphi}_\tau=0$. Having  established that $\bar{V}_\mathrm{dc}=V_\mathrm{dc}^*$, it follows that $\dot{\varphi}_\tau=0$ if and only if
\begin{equation}\label{eq:proof_ss_4}
    0  = -W\mathcal{L}^\mathrm{com}\bar{\theta}_\tau  \Leftrightarrow 
    \bar{\theta}_\tau  = b \boldsymbol{1}_{n_\mathrm{s}},
\end{equation}
for some $b\in \mathbb{R}$. In view of Remark~\ref{rem:invariant_set_theta}, $\bar{\theta}_\tau$ satisfying \eqref{eq:proof_ss_4} is a feasible equilibrium value if and only if $\boldsymbol{1}_{n_\mathrm{s}}^\top T_{\theta_\tau} \bar{\theta}_\tau = \boldsymbol{1}_{n_\mathrm{s}}^\top T_{\theta_\tau}\theta_\tau(0) $, which holds if $b=\beta$, where $\beta$ is given in \eqref{eq:values_alpha_beta}. 

Having proved  that $\bar z_\tau$, $\bar V_\mathrm{dc}$, $\bar \varphi_\tau$, $\bar \theta_\tau$ and $\bar{\hat{r}}_\tau$ are as in \eqref{eq:general_equilibrium}, it is direct to see that   \( \dot{\eta}_{\tau} = 0 \) and \( \dot{z}_{\tau} = 0 \) hold for any constant \( \bar{\eta}_{\tau} \), as stated. Finally, the satisfaction of \eqref{subeq:ssCurrentSharing} directly follows from \eqref{eq:NewVariable} and \eqref{eq:proof_ss_1}, and from the fact that $\bar{z}_\tau =\boldsymbol{0}_{n_\mathrm{s}}$.  $\hfill\blacksquare$
\end{proof}

\subsection{Closed-loop System Stability and Convergence Analysis} \label{subsec:stabilityAnalysis}

The following proposition establishes the technical conditions under which Objectives~\ref{obj:VoltReg} and~\ref{obj:CurrShar} are satisfied. On the one hand, note from \eqref{eq:general_equilibrium} that any closed-loop equilibrium satisfies $\bar{\hat{r}}_\tau=r_\tau$, which implies that the proposed controller offers the possibility of accurately estimating the line resistance matrix $R_\tau$, whose main diagonal entries have been arranged in the vector $r_\tau$ (see \eqref{eq:definition_r_tau}). This feature is central for the satisfaction of Objective~\ref{obj:VoltReg}. On the other hand, note that $\bar \eta$ is not necessarily equal to $l_\tau$; in fact $\bar \eta$ can be any constant vector. We show in the sequel that this is not needed for the satisfaction of Objectives ~\ref{obj:VoltReg} and ~\ref{obj:CurrShar} (see, e.g., \cite[Example 4.10]{khalil2}).

\begin{proposition}\label{prop:local_stab}
Consider the closed-loop system \eqref{eq:closed-loop} with Assumption~\ref{Assumption 1}. Choose
 \begin{align} \label{eq:Tphi_conditions}
T_{\varphi_{\tau,i}} &> L_{\tau,i}^{\mathrm{max}} - L_{\tau,i}^{\mathrm{min}},\quad i=1,2,\ldots,n_s.
\end{align}
Then, all solutions starting sufficiently close to the equilibrium point
\begin{equation}\label{eq:desired_equilibrium}
\bar x=(\boldsymbol{0}_{n_\mathrm{s}},C_\mathrm{dc}V_\mathrm{dc}^*,\alpha T_{\varphi_\tau} W^{-1}\boldsymbol{1}_{n_\mathrm{s}},  \beta'  T_{\theta_\tau} \boldsymbol{1}_{n_\mathrm{s}},T_{\hat{r}_\tau}r_\tau, T_{\eta_{\tau}}l_\tau),
\end{equation}
with $\alpha$ as in \eqref{eq:values_alpha_beta}  and $\beta' \in \mathbb{R}$ being an arbitrary constant, converge to the equilibrium set \(\mathcal{E}_{\mathrm{eq}}\) characterized by \eqref{eq:general_equilibrium}.
Moreover, \eqref{eq:volt_reg} and \eqref{eq:current_shar} hold.
\end{proposition}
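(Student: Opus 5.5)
The plan is a Lyapunov/LaSalle argument built on the port-Hamiltonian structure of \eqref{eq:closed-loop}. Fix $\beta'$, let $\bar x$ be as in \eqref{eq:desired_equilibrium}, and introduce the error variables
$\tilde V:=V_\mathrm{dc}-V_\mathrm{dc}^*$, $\tilde\varphi_\tau:=\varphi_\tau-\bar\varphi_\tau$, $\tilde\theta_\tau:=\theta_\tau-\beta'\boldsymbol{1}_{n_\mathrm{s}}$, $\tilde r_\tau:=\hat r_\tau-r_\tau$ and $\tilde\eta_\tau:=\eta_\tau-l_\tau$. Also write $\sigma:=-\tilde V\boldsymbol{1}_{n_\mathrm{s}}-W\mathcal{L}^\mathrm{com}\theta_\tau$, so that $T_{\varphi_\tau}\dot\varphi_\tau=\sigma$. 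As candidate Lyapunov function I take the shifted quadratic energy
\begin{equation*}
H=\tfrac12\|z_\tau\|^2_{L_\tau}+\tfrac12 C_\mathrm{dc}\tilde V^2+\tfrac12\|\tilde\varphi_\tau\|^2_{T_{\varphi_\tau}}+\tfrac12\|\tilde\theta_\tau\|^2_{T_{\theta_\tau}}+\tfrac12\|\tilde r_\tau\|^2_{T_{\hat r_\tau}}+\tfrac12\|\tilde\eta_\tau\|^2_{T_{\eta_\tau}} .
\end{equation*}
It uses the unknown $L_\tau$ and $r_\tau$ only in the analysis, never in the controller.

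\emph{Dissipation inequality.} Differentiating $H$ along \eqref{eq:closed-loop}, the adaptation laws are designed so that two pairs of terms cancel. The term $z_\tau^\top\mathrm{diag}(\varphi_\tau+z_\tau)\tilde r_\tau$ cancels against the $\tilde r_\tau$-term, and $z_\tau^\top\mathrm{diag}(T_{\varphi_\tau}^{-1}\sigma)\tilde\eta_\tau$ cancels against the $\tilde\eta_\tau$-term. Next, use $C_\mathrm{dc}\dot V_\mathrm{dc}=\boldsymbol{1}_{n_\mathrm{s}}^\top(z_\tau+\tilde\varphi_\tau)-Y\tilde V$, which follows from $\boldsymbol{1}_{n_\mathrm{s}}^\top\bar\varphi_\tau=I_\ell+YV^*_\mathrm{dc}$, together with $\mathcal{L}^\mathrm{com}W\bar\varphi_\tau=0$, $\mathcal{L}^\mathrm{com}\boldsymbol{1}_{n_\mathrm{s}}=0$ and symmetry of $\mathcal{L}^\mathrm{com}$. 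Then all the $\tilde V$-cross terms and $\theta_\tau$-cross terms are skew and cancel, which should leave
\begin{equation*}
\dot H=-z_\tau^\top K_{z_\tau}z_\tau-Y\tilde V^2\le 0 .
\end{equation*}
Hence $\bar x$ is stable, and trajectories starting near $\bar x$ stay in a small compact sublevel set intersected with the invariant set $\mathcal I$ of Remark~\ref{rem:invariant_set_theta}. In that neighbourhood $\varphi_{\tau,i}>0$, and I expect \eqref{eq:Tphi_conditions} to be used here: it should keep $\mathbb I_{n_\mathrm{s}}+T_{\varphi_\tau}^{-1}\mathrm{diag}(\tilde\eta_\tau)$ nonsingular. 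Indeed $|\tilde\eta_{\tau,i}|$ stays comparable to $L^{\max}_{\tau,i}-L^{\min}_{\tau,i}$ when $\eta_\tau$ starts near $l_\tau$ with $l_\tau$ unknown in that interval.

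\emph{LaSalle step (main obstacle).} On the largest invariant set where $\dot H=0$ we have $z_\tau\equiv0$ and $\tilde V\equiv0$. Consequently $\dot{\hat r}_\tau=0$ and $\dot\eta_\tau=0$, so $\tilde r_\tau$ and $\tilde\eta_\tau$ are constant. The $V_\mathrm{dc}$-equation gives $\boldsymbol{1}_{n_\mathrm{s}}^\top\varphi_\tau\equiv I_\ell+YV^*_\mathrm{dc}$. The $z_\tau$-equation becomes the algebraic identity
\begin{equation*}
\mathrm{diag}(\varphi_\tau)\tilde r_\tau=-\bigl(\mathbb I_{n_\mathrm{s}}+T_{\varphi_\tau}^{-1}\mathrm{diag}(\tilde\eta_\tau)\bigr)\sigma,\qquad \sigma=-W\mathcal{L}^\mathrm{com}\theta_\tau .
\end{equation*}
The remaining $(\varphi_\tau,\theta_\tau)$ dynamics are $T_{\varphi_\tau}\dot\varphi_\tau=\sigma$ and $T_{\theta_\tau}\dot\theta_\tau=\mathcal{L}^\mathrm{com}W\varphi_\tau$. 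This is a bounded solution of a linear (oscillator-like) system constrained by the identity above. I would differentiate the identity in time, use the constancy of $\tilde r_\tau,\tilde\eta_\tau$ and the invertibility guaranteed above, and show that a bounded solution forces $\sigma\equiv0$, i.e.\ $\mathcal{L}^\mathrm{com}\theta_\tau=0$. The first thing to try is to express $\varphi_\tau$ through $\sigma$, insert it into $\dot\sigma=-W\mathcal{L}^\mathrm{com}T_{\theta_\tau}^{-1}\mathcal{L}^\mathrm{com}W\varphi_\tau$, and exploit positive semidefiniteness together with the kernel of $\mathcal{L}^\mathrm{com}$. Once $\sigma\equiv0$, positivity of $\varphi_\tau$ gives $\tilde r_\tau=0$. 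Moreover $\varphi_\tau$ is then constant, and $\dot\theta_\tau=0$ forces $W\varphi_\tau\in\mathrm{span}(\boldsymbol{1}_{n_\mathrm{s}})$. By Proposition~\ref{prop:eqAchieveObj}, the invariant set is therefore contained in $\mathcal{E}_\mathrm{eq}$.

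\emph{Conclusion.} LaSalle's principle yields convergence to $\mathcal{E}_\mathrm{eq}$. Because every equilibrium in $\mathcal{E}_\mathrm{eq}$ is itself Lyapunov stable (the same $H$ works centred at any of them), convergence is to a single point. Every point of $\mathcal{E}_\mathrm{eq}$ has $V_\mathrm{dc}=V^*_\mathrm{dc}$, $z_\tau=0$ and $W\varphi_\tau=\alpha\boldsymbol{1}_{n_\mathrm{s}}$. Hence $I_\tau=z_\tau+\varphi_\tau\to\alpha W^{-1}\boldsymbol{1}_{n_\mathrm{s}}$, which gives \eqref{eq:volt_reg} and \eqref{eq:current_shar}.
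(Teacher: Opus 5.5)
Your Lyapunov function and dissipation identity are the paper's $S$ in \eqref{lyap} and \eqref{eq:sdot_solved}, and your reduction on $\{\dot H=0\}$ is also the paper's. The gap is the step you yourself flag as the main obstacle. You never actually establish $\tilde r_\tau=0$ and $\sigma\equiv0$, and the manipulation you propose runs in the wrong direction. Solving $\mathrm{diag}(\varphi_\tau)\tilde r_\tau=-\bigl(\mathbb I_{n_\mathrm{s}}+T_{\varphi_\tau}^{-1}\mathrm{diag}(\tilde\eta_\tau)\bigr)\sigma$ for $\varphi_\tau$ requires inverting $\mathrm{diag}(\tilde r_\tau)$, which is singular precisely in the case you need to reach.

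The missing observation is that $\sigma=T_{\varphi_\tau}\dot\varphi_\tau$ on this set. Your identity is therefore not an algebraic constraint on an oscillator but the decoupled linear ODE
\begin{equation*}
\bigl(T_{\varphi_\tau}+\mathrm{diag}(\tilde\eta_\tau)\bigr)\dot\varphi_\tau=-\mathrm{diag}(\tilde r_\tau)\varphi_\tau .
\end{equation*}
The left-hand matrix is the constant diagonal matrix $T^l_{\varphi_\tau}$ of \eqref{eq:DefineTl}, which is positive definite in the small sublevel set thanks to \eqref{eq:Tphi_conditions}.

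This is how the paper proceeds. Each $\varphi_{\tau,i}$ is a single exponential with positive initial value, so boundedness in the compact set $\mathcal{R}_\kappa$ gives $\tilde r_{\tau,i}\ge0$. Constancy of $\boldsymbol{1}_{n_\mathrm{s}}^\top\varphi_\tau=I_\ell+YV^*_\mathrm{dc}$ then forces $\tilde r_\tau=0$. Hence $\dot\varphi_\tau=0$, so $\sigma=0$, $\theta_\tau=\beta\boldsymbol{1}_{n_\mathrm{s}}$ and $\varphi_\tau=\alpha W^{-1}\boldsymbol{1}_{n_\mathrm{s}}$.

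Your semidefiniteness idea can be rescued if you express $\sigma$ through $\varphi_\tau$ rather than the reverse. Differentiating the ODE gives $T_{\varphi_\tau}(T^l_{\varphi_\tau})^{-2}\mathrm{diag}(\tilde r_\tau)^2\varphi_\tau=T_{\varphi_\tau}\ddot\varphi_\tau=\dot\sigma=-W\mathcal{L}^\mathrm{com}T_{\theta_\tau}^{-1}\mathcal{L}^\mathrm{com}W\varphi_\tau$. Premultiplying by $\varphi_\tau^\top$ equates a nonnegative number with a nonpositive one, so $\mathrm{diag}(\tilde r_\tau)\varphi_\tau=0$ and $\mathcal{L}^\mathrm{com}W\varphi_\tau=0$. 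This variant needs neither boundedness nor the sum constraint.

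Separately, your justification for convergence to a single point does not hold. $H$ cannot be recentred at an equilibrium with $\bar\eta_\tau\neq l_\tau$. The cross term $z_\tau^\top\mathrm{diag}(T_{\varphi_\tau}^{-1}\sigma)(\eta_\tau-l_\tau)$ cancels only against an $\eta_\tau$-term centred at $l_\tau$; otherwise the residual $z_\tau^\top\mathrm{diag}(T_{\varphi_\tau}^{-1}\sigma)(\bar\eta_\tau-l_\tau)$ remains, and it has no sign. You do not need this claim, however. The statement only asks for convergence to $\mathcal{E}_\mathrm{eq}$, and all its points share $z_\tau=0$, $V_\mathrm{dc}=V^*_\mathrm{dc}$ and $\varphi_\tau=\alpha W^{-1}\boldsymbol{1}_{n_\mathrm{s}}$. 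Set convergence therefore already yields \eqref{eq:volt_reg} and \eqref{eq:current_shar}.
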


\begin{proof}
Consider the following Lyapunov function candidate for $\bar x$:
\begin{align}
	\begin{split}
	S(x,\bar x) &= \frac{1}{2} \|z_{\tau}\|^2_{L_\tau} + \frac{1}{2} \|V_\mathrm{dc} - \bar{V}_\mathrm{dc}\|^2_{C_\mathrm{dc}}    \\ 
 & \; \; \; + \frac{1}{2} \|\varphi_{\tau} - \bar{\varphi}_{\tau}\|^2_{T_{\varphi_\tau}} 
	  + \frac{1}{2} \|\theta_{\tau} - \bar{\theta}_{\tau}\|^2_{T_{\theta_\tau}} \\
  & \; \; \; + \frac{1}{2} \|\hat{r}_{\tau} - \bar{\hat{r}}_{\tau}\|^2_{T_{\hat{r}_\tau}} + \frac{1}{2} \|\eta_{\tau} - l_{\tau}\|^2_{T_{\eta_{\tau}}}.	
	\end{split}
    \label{lyap}
\end{align}
$S$ is globally positive-definite with respect to  $\bar x$ and its time-derivative along trajectories of  \eqref{eq:closed-loop} is given by
\begin{equation}
\scalebox{1.0}{$
\begin{aligned}
	\dot{S}(x,\bar x) &= z^\top_\tau L_\tau \dot{z}_\tau + (V_\mathrm{dc} - \bar{V}_\mathrm{dc})^\top C_\mathrm{dc} \dot{V}_\mathrm{dc}  \\
	&\quad + (\varphi_{\tau} - \bar{\varphi}_{\tau})^\top T_{\varphi_\tau} \dot{\varphi}_\tau + (\theta_{\tau} - \bar{\theta}_{\tau}) T_{\theta_\tau} \dot{\theta}_\tau \\
    &\quad + (\hat{r}_{\tau} - \bar{\hat{r}}_{\tau})^\top T_{\hat{r}_\tau} \dot{\hat{r}}_{\tau} + (\eta_{\tau} - l_{\tau})^\top T_{\eta_\tau} \dot{\eta}_{\tau}.
\end{aligned}
$}
\label{eq:sdot}
\end{equation}
To further develop \eqref{eq:sdot}, we shift the right-hand side vector field of  \eqref{eq:closed-loop} with respect  to its value at $\bar x$, obtaining that
\begin{equation}
\scalebox{0.9}{$
\begin{aligned}
L_{\tau} \dot{z}_{\tau} &= -K_{z_\tau} z_{\tau} + \text{diag} (\varphi_{\tau} + z_{\tau}) (\hat{r}_{\tau}-\bar{\hat{r}}_{\tau}) \\
		& \; \; \; -(V_\mathrm{dc} - \bar{V}_\mathrm{dc})\boldsymbol{1}_{n_\mathrm{s}} - W \mathcal{L}^\text{com}(\theta_{\tau} - \bar{\theta}_{\tau}) \\
        & \; \; \; + \text{diag} \left( T^{-1}_{\varphi_\tau} (-(V_\mathrm{dc} - V^*_\mathrm{dc})\boldsymbol{1}_{n_\mathrm{s}} - W  \mathcal{L}^\text{com} \theta_{\tau}) \right) (\eta_{\tau} - l_{\tau}), \\
		C_\mathrm{dc} \dot{V}_\mathrm{dc} &= \boldsymbol{1}_{n_\mathrm{s}}^\top z_{\tau} + \boldsymbol{1}_{n_\mathrm{s}}^\top (\varphi_{\tau} - \bar{\varphi}_{\tau}) -  Y(V_\mathrm{dc} - \bar{V}_\mathrm{dc}),\\  
		T_{\varphi_\tau} \dot{\varphi}_{\tau} &= -(V_\mathrm{dc} - \bar{V}_\mathrm{dc})\boldsymbol{1}_{n_\mathrm{s}} - W \mathcal{L}^\text{com}(\theta_{\tau}-\bar{\theta}_{\tau}), \\
		T_{\theta_\tau} \dot{\theta}_{\tau} &= \mathcal{L}^\text{com} W z_{\tau} + \mathcal{L}^\text{com} W(\varphi_{\tau}-\bar{\varphi}_{\tau}), \\
		T_{\hat{r}_\tau} \dot{\hat{r}}_{\tau} &= -\text{diag} (\varphi_{\tau} + z_{\tau})z_{\tau},\\ 
         T_{\eta_{\tau}} \dot{\eta}_{\tau} &= -\text{diag} \left( T^{-1}_{\varphi_\tau} (-(V_\mathrm{dc} - V^*_\mathrm{dc})\boldsymbol{1}_{n_\mathrm{s}} - W \mathcal{L}^\text{com} \theta_{\tau}) \right)  z_{\tau}.
\end{aligned}$}
\label{eq:shifted_closed-loop}
\end{equation}
Then,  $\dot S$ satisfies the identity
\begin{align}
	\begin{split}
		\dot{S}(x,\bar x) &= -z^\top_\tau	K_{z_\tau} z_\tau -  (V_\mathrm{dc} - \bar{V}_\mathrm{dc})^\top Y(V_\mathrm{dc} - \bar{V}_\mathrm{dc}).
	\end{split}
	\label{eq:sdot_solved}
	\end{align}
Note that  $\dot{S}(x,\bar x)\le0$ and hence $\bar x$ is stable.
 
\bigskip

\noindent To demonstrate our claim about the convergence of the closed-loop system's trajectories, we base the following analysis on LaSalle's invariance principle \cite[Ch.4]{khalil2} (see also Example 4.10 therein). First, we introduce the following  sublevel set of the Lyapunov function $S$:
\begin{equation}\label{eq:set_calR_kappa}
    \mathcal{R}_\kappa=\{x:~S(x,\bar{x})\leq \kappa\},
\end{equation}
We will show that  $\lim _{t\rightarrow\infty} x=\bar x'$ for a sufficiently small $\kappa$ and for some equilibrium $\bar x'\in \mathcal{E}_{\mathrm{eq}}\cap \mathcal{R}_\kappa$.

Moving forward, we introduce the following set:
\begin{align}\label{eq:set_dotS_zero}
    \mathcal{Z} & = \{x:~\dot{S}(x,\bar x)=0\},
\end{align}
 which in view of \eqref{eq:sdot_solved} is characterized by
 \begin{equation}\label{eq:set_dotS_zero_charac}
     \mathcal{Z} =  \{x:~ z_\tau=\boldsymbol{0}_{n_\mathrm{s}} \land V_\mathrm{dc}=V_\mathrm{dc}^*\}.
 \end{equation}
Due to the positive invariance of $\mathcal{R}_\kappa$, it is  possible to choose $\kappa>0$ small enough so that the following two conditions are met simultaneously: 
\begin{subequations}\label{eq:projections}
    \begin{align}
        \text{Proj}_{\varphi_\tau}(\mathcal{R}_\kappa) & \subset \mathbb{R}^{n_\mathrm{s}}_{>0}, \label{eq:proj_R_varphi}\\
        \text{Proj}_{\eta_{\tau}}(\mathcal{R}_\kappa) & \subset [\eta_{\tau,1}^\text{min},\eta_{\tau,1}^\text{max}]\times [\eta_{\tau,2}^\text{min},\eta_{\tau,2}^\text{max}]\times \nonumber\\
        & \phantom{=} \cdots \times [\eta_{\tau,n_\mathrm{s}}^\text{min},\eta_{\tau,n_\mathrm{s}}^\text{max}]\subset \mathbb{R}^{n_\mathrm{s}}. \label{eq:proj_R_hatl}
    \end{align}
\end{subequations}
We note that \eqref{eq:proj_R_varphi} is feasible due to the fact that  \( \bar{\varphi}_{\tau,i} > 0 \) for all \( i \in \mathcal{N}_\mathrm{s} \) (see \eqref{eq:desired_equilibrium}). Also, due to \eqref{eq:Tphi_conditions} and Assumption~\ref{Assumption 1}.(iv), there always exist constants $\eta_{\tau,i}^{\mathrm{max}}\ge L_{\tau,i}^{\mathrm{max}}$ and $\eta_{\tau,i}^{\mathrm{min}} \le L_{\tau,i}^{\mathrm{min}}$ with $T_{\varphi_{\tau,i}} > \eta_{\tau,i}^{\mathrm{max}} - \eta_{\tau,i}^{\mathrm{min}},$ such that \eqref{eq:proj_R_hatl} is feasible.
We continue then under the assumption that $\kappa>0$ is such that \eqref{eq:projections} holds.
Let $[t_0,\infty)\mapsto x(t)$ be an arbitrary solution of \eqref{eq:shifted_closed-loop} restricted to the set
\begin{equation}\label{eq:invariant_set_cap_dots}
   \mathcal{A}:= \mathcal{Z}\cap \mathcal{R}_\kappa\cap \mathcal{I},
\end{equation}
where we have included $\mathcal{I}$, defined in \eqref{eq:invariant_set_theta}, to recall that all solutions of \eqref{eq:shifted_closed-loop} evolve in this invariant set (see Remark~\ref{rem:invariant_set_theta}). Then, $z_\tau=\bar{z}_\tau=\boldsymbol{0}_{n_\mathrm{s}}$ and $V_\mathrm{dc}=\bar{V}_\mathrm{dc}=V_\mathrm{dc}^*$. Consequently the closed-loop dynamics \eqref{eq:closed-loop} is as follows, 
\begin{subequations}
	\begin{align}
		%
			%
			0 &= \text{diag} (\varphi_{\tau})(\hat{r}_{\tau} - r_{\tau})  - W \mathcal{L}^\text{com}\theta_{\tau}  \notag \\ 
            & \; \; \; \;- \text{diag} \left( T^{-1}_{\varphi_\tau} ( W  \mathcal{L}^\text{com} \theta_{\tau}) \right) (\eta_{\tau} - l_{\tau}), \label{subeq:LassaleEq1} \\ 
			0 &= \boldsymbol{1}_{n_\mathrm{s}}^\top \varphi_{\tau} - I_{\ell} - Y\bar{V}_{\mathrm{dc}}, \label{subeq:LassaleEq2}\\  
			T_{\varphi_\tau} \dot{\varphi}_{\tau} &= - W \mathcal{L}^\text{com}\theta_{\tau}, \label{subeq:LassaleEq3}\\
			T_{\theta_\tau} \dot{\theta}_{\tau} &= \mathcal{L}^\text{com} W\varphi_{\tau}, \label{subeq:LassaleEq4}\\
			T_{\hat{r}_\tau} \dot{\hat{r}}_{\tau} &= 0,  \label{subeq:LassaleEq5} \\
            T_{\eta_{\tau}} \dot{\eta}_{\tau} &= 0. \label{subeq:LassaleEq6}
		%
	\end{align}
\end{subequations}
From  \eqref{subeq:LassaleEq5}  and \eqref{subeq:LassaleEq6} it follows directly that $\hat{r}_\tau=\bar{\hat{r}}_\tau'$ and $\eta_{\tau} =\bar{\eta}'_\tau$ for all $t\geq t_0$, for constant $\bar{\hat{r}}_\tau'$ and $\bar{\eta}'_\tau$, respectively.
We show next that $\bar{\hat{r}}_\tau'=r_\tau$. By combining  \eqref{subeq:LassaleEq1} and \eqref{subeq:LassaleEq3} we obtain that
\begin{align}
	0 &= \text{diag} (\varphi_{\tau}) (\bar{\hat{r}}_{\tau}'-r_\tau) + T_{\varphi_\tau} \dot{\varphi}_{\tau}   + \text{diag} (\dot{\varphi}_{\tau}) (\bar{\eta}_{\tau}'-l_\tau).\label{subeq:LassaleEq1Edit}    
\end{align}
Consider the identity \( \text{diag} (\dot{\varphi}_{\tau}) (\bar{\eta}_{\tau}'-l_\tau) = \text{diag}(\bar{\eta}_{\tau}'-l_\tau) \dot{\varphi}_{\tau} \) and let us introduce the constant diagonal matrix
\begin{equation} \label{eq:DefineTl}
     T^l_{\varphi_\tau} \coloneqq T_{\varphi_\tau} + \text{diag} (\bar{\eta}_{\tau}'-l_\tau).
\end{equation}
We can rewrite \eqref{eq:DefineTl} element-wise as:
\begin{align}
     T^l_{\varphi_{\tau,i}} \coloneqq T_{\varphi_{\tau,i}} +  \bar{\eta}_{\tau,i}'-L_{\tau,i}, \quad \forall i\in \mathcal{N}_\mathrm{s},
\end{align}
Due to Assumption~\ref{Assumption 1}.(iv), the choice \( T_{\varphi_{\tau,i}} > \eta_{\tau,i}^{\mathrm{max}} - \eta_{\tau,i}^{\mathrm{min}} \), and since we analyze the closed-loop behavior within the set \( \mathcal{A} \) (see \eqref{eq:invariant_set_cap_dots}), where \( \bar{\eta}'_{\tau,i} \in [\eta_{\tau,i}^{\mathrm{min}}, \eta_{\tau,i}^{\mathrm{max}}] \), it follows that \( T^l_{\varphi_{\tau,i}} \) remains strictly positive for all \( i \in \mathcal{N}_\mathrm{s} \). Consequently, the matrix \( T^l_{\varphi_\tau} \) is positive definite. Now using \eqref{eq:DefineTl} we can reformulate \eqref{subeq:LassaleEq1Edit} as:
\begin{subequations}
\begin{align}
%
    T^l_{\varphi_\tau} \dot{\varphi}_{\tau} & =-\text{diag}(\bar{\hat{r}}_\tau'-r_\tau) \varphi_{\tau},  \label{eq:phidotnew}\\
    & \Leftrightarrow \nonumber\\
    T^l_{\varphi_{\tau,i}}{\dot{\varphi}_{\tau,i}} &= - (\bar{\hat{r}}_{\tau,i}'-r_{\tau,i})\varphi_{\tau,i},\quad \forall i\in \mathcal{N}_\mathrm{s}, \label{subeq:phidotIsCphi} \\
&  \Leftrightarrow \nonumber \\
    {\varphi}_{\tau,i} &= {\varphi}_{\tau,i}(t_0)e^{-\frac{\bar{\hat{r}}_{\tau,i}'-r_{\tau,i}}{ T^l_{\varphi_{\tau,i}}}t},\quad \forall i\in \mathcal{N}_\mathrm{s}, \label{subeq:phiSoln}
%
\end{align}
\end{subequations}
where ${\varphi}_{\tau,i}(t_0)$ denotes an initial value. Since  $x(t)\in \mathcal{A}$ for all $t\geq t_0$ (see \eqref{eq:invariant_set_cap_dots}), it holds that ${\varphi}_{\tau,i}(t_0)>0$ for all $i\in \mathcal{N}_\mathrm{s}$. In view of \eqref{subeq:phiSoln} and since $\mathcal{R}_\kappa$ is compact, then $\bar{\hat{{r}}}_{\tau,i}'-r_{\tau,i}\geq 0$ for all $i\in \mathcal{N}_\mathrm{s}$, or otherwise  $x(t)$ would be unbounded, entailing a contradiction.
Furthermore, if we consider \eqref{subeq:LassaleEq2} and use  $\varphi_{\tau,i}$ from \eqref{subeq:phiSoln}, it is possible to write that
\begin{align}
    \begin{split}
   \sum_{i=1}^{n_\mathrm{s}} {\varphi}_{\tau,i}(t_0)e^{-\frac{\bar{\hat{{r}}}_{\tau,i}'-r_{\tau,i}}{ T^l_{\varphi_{\tau,i}}}t} = I_{\ell} + Y \bar{V}_{\mathrm{dc}}. \label{eq:negatingR>0}
    \end{split}    
\end{align}
However, this equation would be rendered invalid if $\bar{\hat{{r}}}_{\tau,i}'-r_{\tau,i}>0$ for some $i\in \mathcal{N}_\mathrm{s}$. Therefore, $\bar{\hat{{r}}}_{\tau,i}'=r_{\tau,i}$.  Having established this, we now proceed to show that for all \( t \geq t_0 \), the trajectories of \( \theta_\tau \) and \( \varphi_\tau \) coincide with their corresponding equilibrium values in the set \( \mathcal{E}_{\mathrm{eq}} \).

Since $\hat{r}_\tau=r_\tau$ for all $t\geq t_0$, as established,  it follows from \eqref{eq:phidotnew} that $\dot{\varphi}_{\tau} = 0$, which in turn implies, from \eqref{subeq:LassaleEq3}, that
\begin{equation}\label{eq:stab_ana_theta_minus_bartheta}
    \theta_\tau=s(t)\boldsymbol{1}_{n_\mathrm{s}},
\end{equation}

for some $s(t)\in \mathbb{R}$.  Given that in particular $x(t)\in \mathcal{I}$ for all $t\geq t_0$ (see \eqref{eq:invariant_set_cap_dots} and recall $\mathcal{I}$ from \eqref{eq:invariant_set_theta}), then
\begin{align*}
    \boldsymbol{1}_{n_\mathrm{s}}^\top T_{\theta_\tau} \theta_\tau & = \boldsymbol{1}_{n_\mathrm{s}}^\top T_{\theta_\tau} \theta_\tau(0)\\
    & \Rightarrow \\
    \boldsymbol{1}_{n_\mathrm{s}}^\top T_{\theta_\tau} \left(s(t)\boldsymbol{1}_{n_\mathrm{s}} \right) & = \boldsymbol{1}_{n_\mathrm{s}}^\top T_{\theta_\tau} \theta_\tau(0)\\
    & \Rightarrow\\
    s(t) & = \frac{\boldsymbol{1}_{n_\mathrm{s}}^\top T_{\theta_\tau} \theta_\tau(0)}{\boldsymbol{1}_{n_\mathrm{s}}^\top T_{\theta_\tau} \boldsymbol{1}_{n_\mathrm{s}}} \; \; \forall t\geq t_0,
\end{align*}
where we have used the fact that $ \boldsymbol{1}_{n_\mathrm{s}}^\top T_{\theta_\tau}\bar{\theta}_\tau = \boldsymbol{1}_{n_\mathrm{s}}^\top T_{\theta_\tau} \theta_\tau(0)$ (due to $x(t)\in \mathcal{I}$). Hence, $\theta_\tau =\beta\boldsymbol{1}_{n_\mathrm{s}}$ for all $t\geq t_0$. 

Having established that $\theta_\tau= \beta \boldsymbol{1}_{n_\mathrm{s}}$ for all $t\geq t_0$ (which implies that $\dot{\theta}_\tau=0$), then  from \eqref{subeq:LassaleEq4} it follows that
\begin{equation*}
    \varphi_\tau = \hat{s}(t)W^{-1}\boldsymbol{1}_{n_\mathrm{s}},
\end{equation*}
for some $\hat{s}(t)\in \mathbb{R}$. From \eqref{subeq:LassaleEq2} we get 
 that
\begin{align*}
     \boldsymbol{1}_{n_\mathrm{s}}^\top\varphi_\tau=\boldsymbol{1}_{n_\mathrm{s}}^\top \left( \hat{s}(t)W^{-1}\boldsymbol{1}_{n_\mathrm{s}}\right) &=  I_{\ell} + Y\bar{V}_{\mathrm{dc}}\\
    & \Rightarrow\\
     \hat{s}(t) & = \frac{I_{\ell} + Y\bar{V}_{\mathrm{dc}}}{\boldsymbol{1}_{n_\mathrm{s}}^\top  W^{-1}\boldsymbol{1}_{n_\mathrm{s}}} \; \; \forall t\geq t_0.
\end{align*}
Consequently, $\varphi_\tau = \alpha W^{-1}\boldsymbol{1}_{n_\mathrm{s}}$ for all $t\geq t_0$.  
The above analysis shows that the largest invariant set of the closed-loop dynamics, restricted to $\mathcal{R}_\kappa\cap \mathcal{I}$ and for a sufficiently small $\kappa>0$, is given by some equilibrium of the form
\begin{equation}
    \bar x'=(\boldsymbol{0}_{n_\mathrm{s}},C_\mathrm{dc}V_\mathrm{dc}^*,\alpha T_{\varphi_\tau} W^{-1}\boldsymbol{1}_{n_\mathrm{s}},\beta T_{\theta_\tau} \boldsymbol{1}_{n_\mathrm{s}},T_{\hat{r}_\tau}r_\tau, T_{\eta_{\tau}}\bar{\eta}_{\tau}'),
\end{equation}
where $\alpha$ and $\beta$ are as in \eqref{eq:values_alpha_beta}, and $\bar{\eta}'$ is some constant vector. Such an equilibrium  is in agreement with Proposition~\ref{prop:eqAchieveObj}. Consequently, by virtue of LaSalle's invariance principle \cite[Ch.4]{khalil2}, all closed-loop trajectories starting sufficiently close to $\bar x$ in \eqref{eq:desired_equilibrium} will converge to an equilibrium $\bar x' \in \mathcal{E}_{\mathrm{eq}}$, implying the satisfaction of Objectives~\ref{obj:VoltReg} and \ref{obj:CurrShar}. 
$\hfill\blacksquare$

\end{proof}

\vspace{0.2cm}

\section{Experimental Results} \label{Sec:Experiments}

{\color{black}
In this section, we experimentally evaluate the performance of the proposed controller at a \gls{phil} laboratory setup emulating the \gls{epds} level of a hybrid-electric propulsion system. The objective is to test the controller capabilities under realistic electrical disturbances (e.g., measurement noise, communication delays, and parameter uncertainties), thereby demonstrating the robustness of the control strategy. This subsystem-level validation provides a foundation for future investigations under more comprehensive aero-thermal conditions. The proposed controller is benchmarked against the state-of-the-art droop control method commonly employed in \gls{mg} control, as well as against the controller presented in \cite{Trip2}.

It is important to note that the controller in \cite{Trip2} was developed for a different setup. In \cite{Trip2}, multiple \gls{dgu}s are connected to corresponding PCCs, whereas in our work, multiple electric sources are connected to a single non-actuated load bus representing the main DC link of the propulsion system. Moreover, \cite{Trip2} targets average voltage regulation among generation buses rather than voltage regulation at the load bus. The comparison is included because the controller in \cite{Trip2} has a similar distributed structure and can be implemented in the considered \gls{epds} setting when line parameters are known and \gls{dgu} inductor currents are interpreted as line currents. Under these assumptions, the controller in \cite{Trip2} can be tuned to achieve voltage regulation in the considered setup. It therefore serves as a representative parameter-dependent benchmark against which the proposed adaptive controller can be compared.
}

\subsection{Experimental Setup} \label{subsec:ExperimentalSetup}
\newcommand{\ifilt}[1]{i^\mathrm{f}_{#1}}
\newcommand{\iout}[1]{I_{#1}}
\newcommand{\uout}[1]{u_{#1}}
\newcommand{\ubus}{V_\mathrm{dc}}

To experimentally validate the controller design, we performed a series of tests in the \gls{phil} Laboratory at BTU Cottbus-Senftenberg~\cite{krishna_power-hardware---loop_2022}. Next we introduce some local notation to help us represent the dynamics of parts of the system setup that were neglected or simplified earlier.

We have used four converters with a power rating of \SI{15}{kW} each, and each of which is programmable through an independent \gls{rtc}. Converters 1-3 are equipped with an LC output filter, while Converter-4 is equipped with an LCL output filter. The experimental setup is shown in Fig.~\ref{fig:Expsetup}  and a schematic overview is provided in Fig.\ref{fig:labsetup}.

Converters 1-3 are operated in voltage control mode, with the output voltage of the $i$-th converter  given by the voltage $u_i$ across the capacitor of the LC filter. The voltage setpoint $\uout{i}^*$  is provided by the controller \eqref{eq:controllerInDistributed}, where we find it convenient to  recall that an underlying assumption behind model~\eqref{eq:DcMgHepDiffEqn} is that \gls{dgu}s' internal dynamics are negligible. In addition, Converter-4  is operated in current-control mode to represent the electrical load.

The three voltage source converters are connected via power lines to the main DC bus (i.e., \gls{pcc}). Each power line consists of two capacitors located at both ends, with an inductor and a resistor positioned between them. The capacitor of the power line connected to the voltage source converter can be considered combined with the capacitor of the converter's LC filter, effectively treating them as a single lumped component. Additionally, the capacitors of the power lines connected to the PCC can be combined into a single equivalent capacitor with capacitance \( C_\mathrm{dc} \). The current-controlled converter is directly connected to the \gls{pcc}.
In this setup, the voltage-controlled converters mimic the DGUs in a series-hybrid-electric propulsion system, while the current-controlled converter mimics the behavior of the \gls{epu}. This approach ensures that the experimental setup closely mimics the model \eqref{eq:DcMgHepDiffEqn}. 

Each converter can access local measurements of filter currents $\ifilt{i}$, $\iout{\tau,i}$, and filter voltage $\uout{i}$. Then, we have set up a communication network $\mathcal{G}^{\mathrm{com}}$: the current-controlled Converter-4 measures the main bus voltage $\ubus$ and broadcasts this measurement to converters 1-3; Converters 1-3 share their respective output current measurement $\iout{\tau}$ and control variable $\theta_{\tau}$ with their neighbor(s) in the communication network $\mathcal{G}^{\mathrm{com}}$. We note that each communication link introduces a delay of approximately \SI{2}{ms}.

\begin{figure*}
  \centering
  \includegraphics[width=1\textwidth]{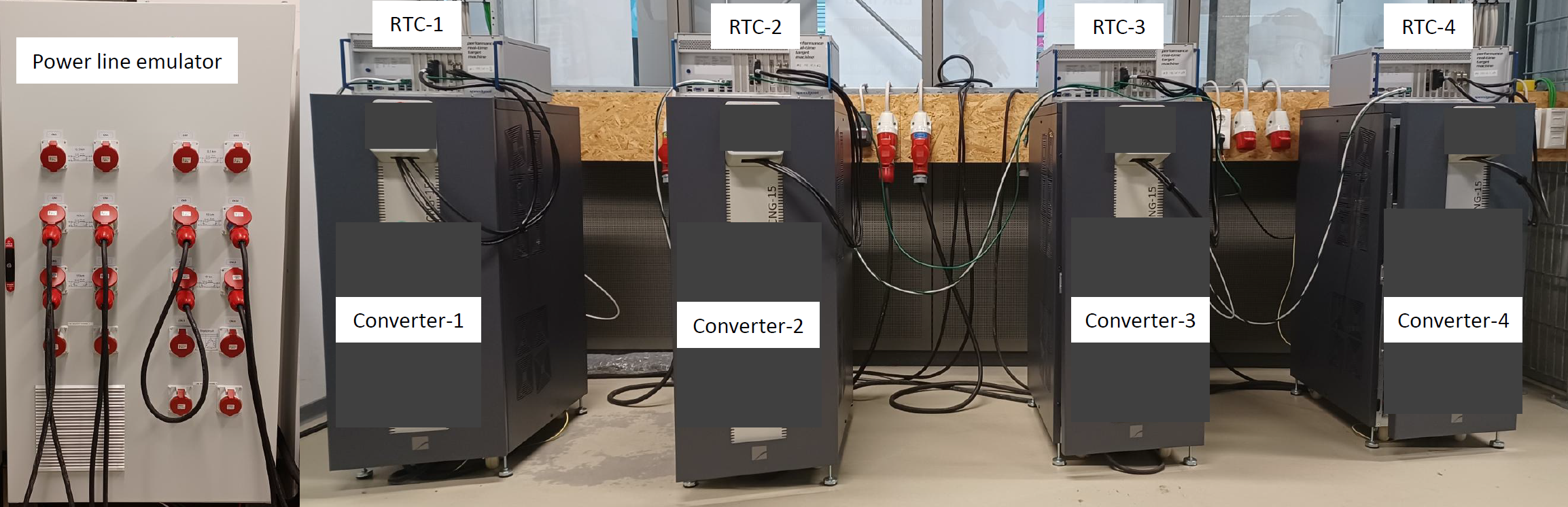}
  \caption{Experimental setup for the \gls{phil} testing.}\label{fig:Expsetup}
\end{figure*}

\begin{figure*}
  \centering
  \resizebox{\textwidth}{!}{
  \newcommand{\voltagesource}[2]{
  \node[block, minimum width=1cm, minimum height=1.5cm, #2](igbt#1){};
  \node at (igbt#1) [nigbt, scale=0.6, anchor=centergap]{} ;
  \draw (igbt#1.east 1)
  to [short,i=\scriptsize $\ifilt{#1}$] ++(.5,0)
  to [L] ++(1,0)
  to [short,-*] ++(.25,0) coordinate(csplit){}
  to [short] ++(0.5,0) coordinate (terminal#1);
  \draw (csplit) to [C, v<=\scriptsize $\uout{#1}\,\,$] ++(0,-1);
  \path (csplit) ++(0,-0.75) node[ground, scale=.75]{};
  \node[block,left=0.5cm of igbt#1, minimum width=0cm](vc#1){\rotatebox{90}{\scriptsize Voltage Control}};
  \draw[connection](vc#1)--(igbt#1) node[pos=.5,above]{\tiny PWM};
  \draw[connection,<-](vc#1.west 1) --++ (-1,0) node[pos=.5, above]{\scriptsize $\ifilt{#1}$, $\uout{#1}$};
}
\newcommand{\piline}[1]{
  \draw (terminal#1) to [short] ++(.75,0)
  to [short,i=\scriptsize $\iout{\tau,#1}$] ++(.5,0)
  to [L]++(1,0)
  to [R]++(1,0)
  to[short]++(1.25,0) coordinate (lineterminal#1);
  \draw (terminal#1)++(0.5,0) to [C,*-] ++(0,-1);
  \path (terminal#1)++(0.5,0) ++(0,-0.75) node[ground, scale=.75]{};
  \draw (lineterminal#1)++(-0.75,0) to [C,*-] ++(0,-1);
  \path (lineterminal#1)++(-0.75,0) ++(0,-0.75) node[ground, scale=.75]{};
}
\newcommand{\buscontrol}[1]{
  \node[block, left=1.5cm of vc#1](buscontrol#1){\scriptsize Bus\\\scriptsize Voltage\\\scriptsize Control};
  \draw[connection](buscontrol#1.east) node[above right]{\scriptsize $\uout{#1}^{*}$}--(vc#1);
  \draw[connection, <-](buscontrol#1.west A)--++(-.5,0) node[left]{\scriptsize $\ubus$};
  \draw[connection, <-](buscontrol#1.west B)--++(-.5,0) node[left]{\scriptsize $\iout{\tau,#1}$};
}
\begin{tikzpicture}
  \voltagesource{1}{};
  \voltagesource{2}{below=1cm of igbt1};
  \voltagesource{3}{below=1cm of igbt2};

  \piline{1};
  \piline{2};
  \piline{3};

  \draw[line width=0.1cm] (lineterminal1)++(0,0.5) node[above]{\scriptsize main bus (PCC) $\ubus$} --(lineterminal3)--++(0,-1.5);

  \draw (lineterminal2)   
  to [short,i=\scriptsize $\iout{\ell}$] ++(1,0)
  to [L] ++(1,0)
  to [short,-*] ++(.5,0) coordinate(csplit){}
  to [short,i=\scriptsize $\ifilt{\mathrm{\ell}}$] ++(.75,0)
  to [L] ++(1,0)
  to [short] ++(0.5,0) coordinate (lineend);
  \draw (csplit) to [C, v<=\scriptsize $\uout{\mathrm{\ell}}$] ++(0,-1);
  \path (csplit) ++(0,-0.75) node[ground, scale=.75]{};

  \node[right=0 cm of lineend, block, minimum width=1cm, minimum height=1.5cm, yshift=-.5cm](igbtB){};
  \node at (igbtB) [nigbt, scale=0.6, anchor=centergap, rotate=180]{} ;
  \node[block, below=.5cm of igbtB, minimum height=1cm](CC){\scriptsize Current\\\scriptsize Control};
  \draw[connection](CC)--(igbtB) node[pos=.5, right]{\tiny PWM};
  \node[block, left=2cm of CC, yshift=-.25cm](loadprofile){\scriptsize Load Profile};
  \draw[connection](loadprofile.east) node[below right]{\scriptsize $\iout{\mathrm{\ell}}^{*}$}--(CC.west B);
  \draw[connection,<-](CC.west A) --++ (-.25,0) node[left]{\scriptsize $\iout{\mathrm{\ell}}$, $\ifilt{\mathrm{\ell}}$, $\uout{\mathrm{\ell}}$};

  \buscontrol{1}
  \draw[connection,<-](buscontrol1.south)--++(0,-.25)node[below]{\scriptsize $\iout{\tau,2}, \theta_{\tau,2}$};

  \buscontrol{2}
  \draw[connection,<-](buscontrol2.north)--++(0, .25)node[above]{\scriptsize $\iout{\tau,1}, \theta_{\tau,1}$};
  \draw[connection,<-](buscontrol2.south)--++(0,-.25)node[below]{\scriptsize $\iout{\tau,3}, \theta_{\tau,3}$};

  \buscontrol{3}
  \draw[connection,<-](buscontrol3.north)--++(0, .25)node[above]{\scriptsize $\iout{\tau,2}, \theta_{\tau,2}$};

  \path (buscontrol3.north east)++(1,-1.25) coordinate (curly0end);
  \draw [decorate,decoration={brace,amplitude=5pt,mirror,raise=4ex}]
  (buscontrol3.north east)++(1,-1.25) --++ (4.625,0) coordinate (curly1end) node[midway,yshift=-3em]{non-ideal voltage sources};
  \draw [decorate,decoration={brace,amplitude=5pt,mirror,raise=4ex}]
  (curly1end)++(0.25,0) --++ (4.0,0) coordinate (curly2end) node[midway,yshift=-3em](curly1end){power lines};
  \draw [decorate,decoration={brace,amplitude=5pt,mirror,raise=4ex}]
  (curly2end)++(0.75,0) --++ (6,0) node[midway,yshift=-3em](curly1end){current load};

  \draw [decorate,decoration={brace,amplitude=5pt,mirror,raise=4ex}]
  (buscontrol3.north east)++(-2.5,-1.25) --++ (3,0) node[midway,yshift=-3em]{distributed controller};
\end{tikzpicture}
  }
  \caption{Schematic overview of the experimental setup for the \gls{phil} testing. }\label{fig:labsetup}
\end{figure*}



\subsection{Results}

We assess the performance of the proposed control scheme using the \gls{phil} setup described in Section \ref{subsec:ExperimentalSetup}. Its performance is quantitatively compared to that of the droop controller and the controller presented in \cite{Trip2}. For the sake of brevity, we refer to the proposed controller as C1, the droop controller as C2, and the controller from \cite{Trip2} as C3. 

The system parameters are given in Table~\ref{table:simulationparameters}. The controller  gains for C1 are $T_{\varphi_\tau} = \mathbb{I}_3$, $T_{\theta_\tau} = \mathbb{I}_3$, $T_{\hat{r}_\tau} = 10\mathbb{I}_3$, $T_{\eta_{\tau}} = 10^6\mathbb{I}_3$ and $K_{z_\tau} = 2\mathbb{I}_3$. Without loss of generality, we take  $W = \mathbb{I}_3$, ensuring equal current from the three \gls{dgu}s at a steady-state to clearly show the accomplishment of Objective~\ref{obj:CurrShar}. The controller gains are also chosen for C2 and C3 using trial and error for a suitable transient response. C3 requires the power line resistances $R_\tau$ to be known. To account for realistic conditions, the resistances are set to their actual values but with a 10\% error, reflecting the fact that, in real scenarios, power line resistances may not be exactly known and have to be estimated. 

In Table~\ref{table:flightSegments}, we display the scaled propulsion load current demand ($I_\ell$) associated with selective flight segments within a regional mission profile for a series-hybrid-electric aircraft. Three flight segments are considered to demonstrate that the controller can achieve its objectives at different power levels. Each flight segment time has been scaled for better visualization, but notably, the controller's response remains unscaled. Furthermore, due to the limited power-handling capacity of our experimental setup, we scaled down the power requirements for each flight segment accordingly. {\color{black} Since the controller design and stability analysis do not rely on the numerical value of $I_\ell$, but only on its positivity and boundedness, this scaling does not affect the qualitative stability and performance conclusions.} The base voltage of the main DC bus and base propulsion load current are $V^{\mathrm{base}}_\mathrm{dc} = 200 \si{\volt}$ and $I^{\mathrm{base}}_l = 6.7 \si{\ampere}$, respectively.

\begin{center}
 {\footnotesize
	\captionof{table}{System Parameters.} \label{table:simulationparameters}
	\begin{tabular}{ |l |c |}
		 \hline
		Model Parameters & Values \\
		 \hline
		  Line inductances, $L_{\tau,i}$, $i = 1,2,3.$ & [900, 550, 350] \si{\micro\henry} \\
		\hline  
		Line resistances, $R_{\tau,i}$, $i = 1,2,3.$ & [1.33, 0.78, 0.71] $\Omega$\\
            \hline 
            Main bus capacitance $C_\mathrm{dc}$ & 0.318 \si{\micro\farad}\\
            \hline

	\end{tabular}
 }
\end{center}


\begin{center}
 {\footnotesize
	\captionof{table}{Scaled Regional Mission Profile for HEA.} \label{table:flightSegments}
	\begin{tabular}{ |l |c |c|} 
		 \hline
		Segment & Duration (sec) & $I_\ell$ (p.u.) \\
		 \hline
		  Takeoff & 35 & 2.98 \\
		\hline  
		Cruise  &  25 &  2.3 \\
            \hline 
            Landing  &  25 &  1.7 \\
            \hline

	\end{tabular}
 }
\end{center}

The results of our experimental test for C1 are displayed in Fig.~\ref{fig:Experiments}. The evolution of the load bus voltage $V_\mathrm{dc}$, the DGUs' injected currents $I_\tau$, the variable $\hat{r}_{\tau}$ used to estimate the line resistance matrix $R_\tau$  are illustrated in Fig.\ref{subfig:voltageRegulation}, Fig.\ref{subfig:currentSharing}, Fig.\ref{subfig:resistanceEstimation}, respectively. It can be observed in Fig.\ref{subfig:voltageRegulation} and Fig.\ref{subfig:currentSharing} that despite the changes in the load current demand, the control Objectives~1 and 2 are satisfied. The controller C1 also estimates the power line resistances with reasonable accuracy, as shown in Fig.\ref{subfig:resistanceEstimation}.

The performance of the proposed controller, C1, was compared  against the controllers C2 and C3 from the literature. In particular, we compared the capacity to achieve voltage regulation and load-sharing.  The results are respectively shown in Fig.~\ref{fig:Comparison}. In particular ~\ref{subfig:voltageDeviation} shows the absolute  voltage deviation (in \%) from the reference voltage $V_\mathrm{dc}^*$ for the  three controllers. Note that C1 exhibits minimal deviation from the reference voltage, while C2 displays the largest deviation. 
Fig. \ref{subfig:currentDeviation} illustrates the deviation from the expected consensus in load-sharing for the three controllers. To be precise, the Euclidean norm of the sum of current differences ($I_{\tau,i}-I_{\tau,j} \;  \; \forall \;  \; i,j \; \epsilon \;  \; \mathcal{N}_\mathrm{s}$) is plotted over time. Note that C1 and C3 demonstrate good performance in achieving Objective~2, exhibiting comparable results. In contrast, C2 shows the largest deviation from the expected consensus in load-sharing

In summary, C1 demonstrates superior overall performance compared to C2 and C3. The controller C2 neglects power line resistance entirely, while C3 relies on precise prior knowledge of it. In contrast, C1 adaptively estimates power line resistance in real-time, which is particularly advantageous in practical systems where full system parameters are often unknown. This adaptability allows C1 to better accommodate system uncertainties, resulting in improved overall performance.

\begin{figure}[hbt!]
\centering

\subfloat[Voltage across the PCC. \label{subfig:voltageRegulation}]
{\includegraphics[width=0.95\columnwidth]{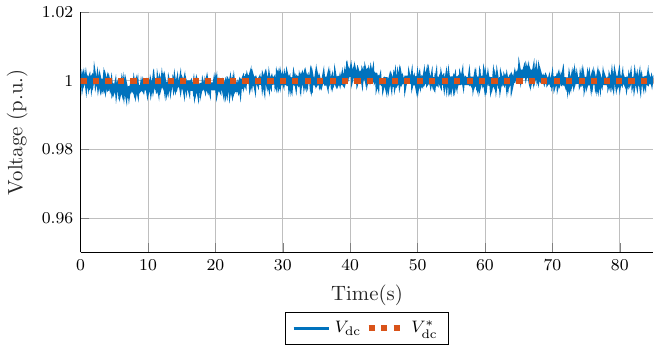}}

\vspace{1em}

\subfloat[Currents generated by the \gls{dgu}s. \label{subfig:currentSharing}]
{\includegraphics[width=0.95\columnwidth]{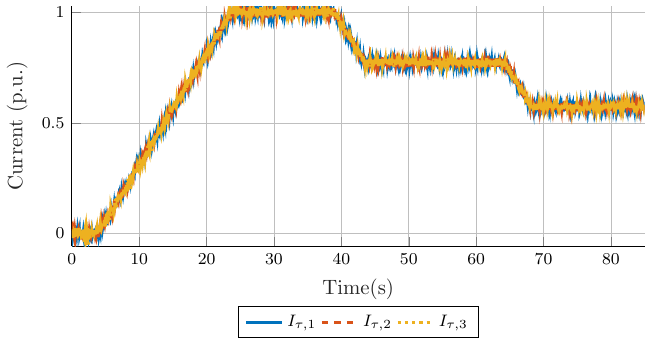}}

\vspace{1em}

\subfloat[Comparison of power line resistance estimations to measured values. \label{subfig:resistanceEstimation}]
{\includegraphics[width=0.95\columnwidth]{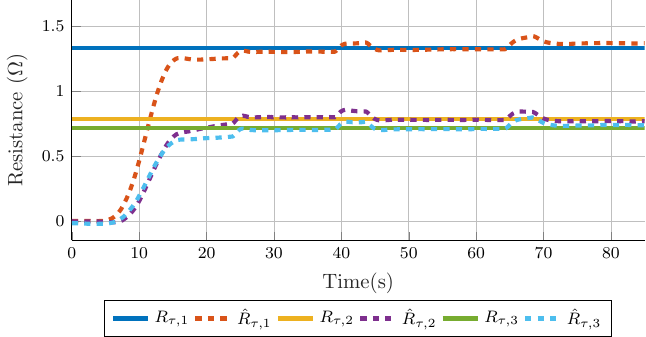}}

\caption{Experimental validation of the proposed controller.}
\label{fig:Experiments}
\end{figure}

\begin{figure}[hbt!]
\centering
\subfloat[Comparison of voltage regulation performance. \label{subfig:voltageDeviation}]
{\includegraphics[width=0.95\columnwidth]{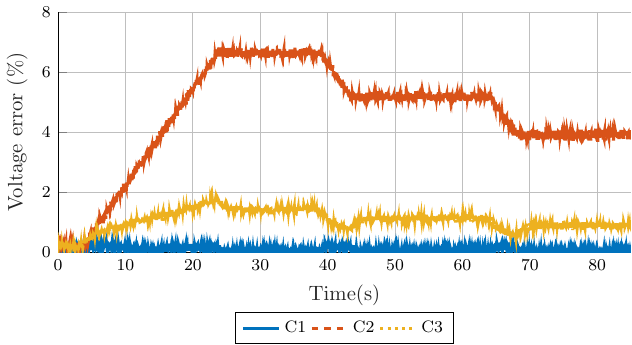}}\\
\subfloat[Comparison of current sharing performance. \label{subfig:currentDeviation}]
{\includegraphics[width=0.95\columnwidth]{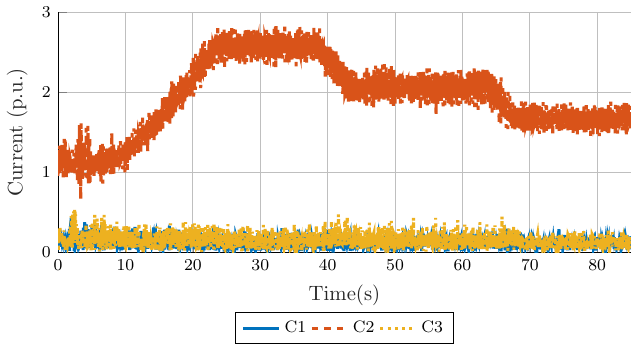}}\\
\caption{Performance comparison for the controllers C1, C2 and C3.}
\label{fig:Comparison}
\end{figure}

\section{CONCLUSIONS} \label{Sec:Conclusion}

In this paper, a distributed control algorithm has been proposed to achieve voltage regulation and load-sharing for an \gls{epds} of a series-hybrid-electric propulsion system.
The considered on-board \gls{epds} model includes non-negligible, unknown power line resistances and unknown but bounded inductances. These factors hinder the application of existing control algorithms developed for similar objectives in DC \gls{mg}s.
Our controller suitably combines the principles of back-stepping, adaptive control, and passivity-based control methods. 
Notably, the adaptive feature of the controller allows us to mitigate the challenges brought by the presence of non-negligible power line resistances and inductances, and eases the stability and convergence analysis of the resulting closed-loop system.
Additionally, we evaluated the performance of the proposed control scheme through  an experimental  \gls{phil} setup,  showing a superior performance against two state-of-the-art controllers.

{\color{black} As future research, we plan to incorporate the bilinear internal dynamics of the power converters into the considered model and investigate stability and convergence under scenarios involving connection and disconnection of \glspl{dgu}. This will require accounting for the resulting switching communication topologies, thereby enabling rigorous plug-and-play operation of electrical sources. Furthermore, building on the foundation provided by the \gls{phil} experiments, future work will include validating the controller under representative aerothermal conditions at the Center for Hybrid Electric Systems Cottbus (CHESCO) facility.}

\section*{Appendix}

In this Appendix, we provide further details behind the conception of the proposed controller \eqref{eq:controller}, which as mentioned in   Section~\ref{subsec:PropsedControl}, is based on the general design methodology  IDA-PBC with dynamic extension \cite{astolfi_ortega}. In our particular case, the design of the vector fields defining the dynamic extension is based on proportional consensus dynamics and back-stepping.

Consider the following general non-linear control system:
\begin{equation}\label{eq:PhModel}
\dot x_\mathrm{p} = f(x_\mathrm{p})+g(x_\mathrm{p})u,
\end{equation}
where $x_p \in \mathbb{R}^n$, $f(x)\in \mathbb{R}^n$, $g(x)\in \mathbb{R}^{n\times m}$ and $u\in \mathbb{R}^m$. In a nutshell, the design idea of IDA-PBC with dynamic extension is to develop a dynamic feedback law of the form
\begin{align}
    \begin{split}
        u = \gamma(x_\mathrm{p},\xi), \\
        \dot{\xi} = \rho(x_\mathrm{p},\xi),
    \end{split}
\end{align}
where $\xi\in \mathbb{R}^s$ are the controller's states, and $\gamma$ and $\rho$ are continuous functions, such that the closed-loop system has a port-Hamiltonian structure \cite{arjan_book_port_Ham}:
\begin{align}
\begin{split}
\text{col}(\dot{x}_\mathrm{p}, \dot{\xi})   &= (\mathcal{J}_\mathrm{d}(x_\mathrm{p},\xi) - \mathcal{R}_\mathrm{d}(x_\mathrm{p},\xi))\nabla H_\mathrm{d}(x_\mathrm{p},\xi),
\end{split}
\label{eq:PhClosed-loop}
\end{align}
where $\mathcal{J}_\mathrm{d}$ is skew-symmetric, $\mathcal{R}_\mathrm{d}$ is symmetric positive semi-definite, and $H_\mathrm{d}: \mathbb{R}^n \times \mathbb{R}^s \rightarrow \mathbb{R}$ represents a Hamiltonian function. Here, $\nabla H_\mathrm{d}$ denotes the transposed gradient of $H_\mathrm{d}$ with respect to $\text{col}(x_\mathrm{p},\xi)$. A sufficient condition for \eqref{eq:PhClosed-loop} to admit a stable equilibrium is that $H_\mathrm{d}$ has a local minimum at it.

Consider then the system model \eqref{eq:DcMgHepDiffEqn} in open-loop, along with the dynamic extension proposed in \eqref{eq:controller}, and propagate the change of variables \eqref{eq:NewVariable}. This results in the following dynamics: 
\begin{align}
	\begin{split}		
		L_{\tau} \dot{z}_{\tau} &= -R_{\tau}(\varphi_{\tau} +z_{\tau}) - \boldsymbol{1}_{n_\mathrm{s}}V_\mathrm{dc} + u \\
		& \; \; \; -L_{\tau}T_{\varphi_\tau}^{-1}(-(V_\mathrm{dc} - V^*_\mathrm{dc})\boldsymbol{1}_{n_\mathrm{s}} - W \mathcal{L}^\text{com} \theta_{\tau}), \\
		C_\mathrm{dc} \dot{V}_\mathrm{dc} &= \boldsymbol{1}_{n_\mathrm{s}}^\top (z_{\tau} + \varphi_{\tau}) -I_{\ell} -  Y V_\mathrm{dc},\\
		T_{\varphi_\tau} \dot{\varphi}_{\tau} &= -(V_\mathrm{dc} - V^*_\mathrm{dc})\boldsymbol{1}_{n_\mathrm{s}} - W \mathcal{L}^\text{com} \theta_{\tau}, \\
		T_{\theta_\tau} \dot{\theta}_{\tau} &= \mathcal{L}^\text{com} W (z_{\tau} + \varphi_{\tau}), \\
		T_{\hat{r}_\tau} \dot{\hat{r}}_{\tau} &= -\text{diag} (\varphi_{\tau} + z_{\tau})z_{\tau},\\
        T_{\eta_{\tau}} \dot{\eta}_{\tau} &= -\text{diag} \left( T^{-1}_{\varphi_\tau} (-(V_\mathrm{dc} - V^*_\mathrm{dc})\boldsymbol{1}_{n_\mathrm{s}} - W \mathcal{L}^\text{com} \theta_{\tau}) \right)  z_{\tau},\\
	\end{split}
	\label{eq:open-loop+dynaext}
\end{align}
In order to compute the feedback law for $u$ that would render a closed-loop system as in \eqref{eq:PhClosed-loop}, we assign $\mathcal{J}_d$, $\mathcal{R}_\mathrm{d}$ and $H_\mathrm{d}$ as follows:
\begin{align*}
\mathcal{J}_\mathrm{d} & =\begin{bmatrix}
\mathcal{J}_\mathrm{d}^{(1,1)} & \mathcal{J}_\mathrm{d}^{(1,2)}\\
-(\mathcal{J}_\mathrm{d}^{(1,2)})^\top & \boldsymbol{0}_{3n_\mathrm{s}\times 3n_\mathrm{s}}
\end{bmatrix},\\
\mathcal{R}_\mathrm{d} & = \text{diag}\left(K_{z_\tau},Y,\boldsymbol{0}_{4n_\mathrm{s} \times 4n_\mathrm{s}}\right)\\
H_\mathrm{d}(x_\mathrm{p},\xi) & = \frac{1}{2} \| \text{col}(x_p,\xi) - \text{col}(\bar{x}_p, \bar{\xi}) \|^2_{\mathcal{M}_d},
\end{align*}
where
\begin{align*}
    \mathcal{J}_\mathrm{d}^{(1,1)} & = \begin{bmatrix} 0_{n_\mathrm{s} \times n_\mathrm{s}} & -\boldsymbol{1}_{n_\mathrm{s}} & 0_{n_\mathrm{s} \times n_\mathrm{s}} \\
                   \boldsymbol{1}_{n_\mathrm{s}}^\top & 0 & \boldsymbol{1}_{n_\mathrm{s}}^\top \\
                   0_{n_\mathrm{s} \times n_\mathrm{s}} & -\boldsymbol{1}_{n_\mathrm{s}} & 0_{n_\mathrm{s} \times n_\mathrm{s}} 
                   \end{bmatrix},\\
 \mathcal{J}_\mathrm{d}^{(1,2)} & = \begin{bmatrix}
     -W\mathcal{L}^\mathrm{com} & \text{diag}(\varphi_\tau+z_\tau) & \text{diag}(\mathcal{F})\\
     0_{n_\mathrm{s}}^\top & 0_{n_\mathrm{s}}^\top & 0_{n_\mathrm{s}}^\top\\
     -W\mathcal{L}^\mathrm{com} & 0_{n_\mathrm{s}\times n_\mathrm{s}} & 0_{n_\mathrm{s}\times n_\mathrm{s}} \end{bmatrix} ,\\
     \mathcal{M}_d & = \text{diag}^{-1}\left(L_{\tau}, C_{\mathrm{dc}}, T_{\varphi_\tau}, T_{\theta_\tau}, T_{\hat{r}_\tau}, T_{\eta_{\tau}}\right),
\end{align*}
with
\begin{equation*}
    \mathcal{F}=T^{-1}_{\varphi_\tau} (-(V_\mathrm{dc} - V^*_\mathrm{dc})\boldsymbol{1}_{n_\mathrm{s}} - W \mathcal{L}^\text{com} \theta_{\tau}).
\end{equation*}

Via direct computations it can be verified that the above choices of $\mathcal{J}_d$, $\mathcal{R}_\mathrm{d}$ and $H_\mathrm{d}$ satisfy the  \emph{extended} matching equation \cite[Equation~(5)]{astolfi_ortega}. Therefore,  it is possible to use the formula in \cite[Equation~(6)]{astolfi_ortega} to compute the feedback law for $u$ shown in \eqref{eq:controllerInDistributed}. Note that (local) stability of the equilibrium $\bar x$ in \eqref{eq:desired_equilibrium} follows from the fact that $\bar x$ represents a critical point of $H_\mathrm{d}$ and from the fact that $\dot H_\mathrm{d}=-(\nabla H_\mathrm{d})^\top \mathcal{R}_\mathrm{d}\nabla H_\mathrm{d}\leq 0$ along system trajectories.

\section*{Acknowledgement}
The ChatGPT tool \cite{chatgpt} was used to improve the syntax and grammar of certain paragraphs in the manuscript. The ChatGPT tool \cite{chatgpt} was also employed to assist in identifying relevant references related to the power control of hybrid-electric aircraft.


\bibliographystyle{IEEEtran}
\bibliography{References}

@ARTICLE{DoffSotta,
  author={Doff-Sotta, Martin and Cannon, Mark and Bacic, Marko},
  journal={IEEE Transactions on Control Systems Technology}, 
  title={Predictive Energy Management for Hybrid Electric Aircraft Propulsion Systems}, 
  year={2023},
  volume={31},
  number={2},
  pages={602-614},
  doi={10.1109/TCST.2022.3193295}}

@ARTICLE{Tucci,
  author={Tucci, Michele and Riverso, Stefano and Ferrari-Trecate, Giancarlo},
  journal={IEEE Transactions on Control Systems Technology}, 
  title={Line-Independent Plug-and-Play Controllers for Voltage Stabilization in DC Microgrids}, 
  year={2018},
  volume={26},
  number={3},
  pages={1115-1123},
  doi={10.1109/TCST.2017.2695167}}

@Article{CheYanbo,
AUTHOR = {Che, Yanbo and Xu, Jianmei and Shi, Kun and Liu, Huanan and Chen, Weihua and Yu, Dongmin},
TITLE = {Stability Analysis of Aircraft Power Systems Based on a Unified Large Signal Model},
JOURNAL = {Energies},
VOLUME = {10},
YEAR = {2017},
NUMBER = {11},
ARTICLE-NUMBER = {1739},
ISSN = {1996-1073},
DOI = {10.3390/en10111739}
}

@incollection{ASHRAE_Aircraft_2019,
  title     = {Aircraft},
  booktitle = {ASHRAE Handbook---HVAC Applications},
  chapter   = {13},
  year      = {2019},
  publisher = {American Society of Heating, Refrigerating and Air-Conditioning Engineers (ASHRAE)},
}

@article{Benjamin,
title = {Electric, hybrid, and turboelectric fixed-wing aircraft: A review of concepts, models, and design approaches},
journal = {Progress in Aerospace Sciences},
volume = {104},
pages = {1-19},
year = {2019},
issn = {0376-0421},
doi = {https://doi.org/10.1016/j.paerosci.2018.06.004},
author = {Benjamin J. Brelje and Joaquim R.R.A. Martins},
}

@article{Potamiti, 
title={Thermal management system design for a series hybrid-electric propulsion architecture}, 
volume={128}, 
DOI={10.1017/aer.2023.111}, 
number={1325}, 
journal={The Aeronautical Journal}, 
author={Potamiti, M. and Gkoutzamanis, V.G. and Kalfas, A.I.}, 
year={2024}, 
pages={1532–1555}}

@Article{Ouyang,
AUTHOR = {Ouyang, Zeyu and Nikolaidis, Theoklis and Jafari, Soheil and Pontika, Evangelia},
TITLE = {Integrated Power and Thermal Management System in a Parallel Hybrid-Electric Aircraft: An Exploration of Passive and Active Cooling and Temperature Control},
JOURNAL = {Engineering Proceedings},
VOLUME = {90},
YEAR = {2025},
NUMBER = {1},
ARTICLE-NUMBER = {36},
ISSN = {2673-4591},
DOI = {10.3390/engproc2025090036}
}

@ARTICLE{Guerrero,
  author={Guerrero, Josep M. and Vasquez, Juan C. and Matas, José and de Vicuna, Luis García and Castilla, Miguel},
  journal={IEEE Transactions on Industrial Electronics}, 
  title={Hierarchical Control of Droop-Controlled AC and DC Microgrids—A General Approach Toward Standardization}, 
  year={2011},
  volume={58},
  number={1},
  pages={158-172},
  doi={10.1109/TIE.2010.2066534}}

@article{BAI,
title = {Voltage regulation and current sharing for multi-bus DC microgrids: A compromised design approach},
journal = {Automatica},
volume = {142},
pages = {110340},
year = {2022},
issn = {0005-1098},
doi = {https://doi.org/10.1016/j.automatica.2022.110340},
author = {Handong Bai and Hongwei Zhang and He Cai and Johannes Schiffer},
}

@misc{chatgpt,
  author       = {{OpenAI}},
  title        = {{ChatGPT}: Language model (GPT-4)},
  year         = {2024}
}

@book{khalil2,
  title={Nonlinear systems},
  author={Khalil, Hassan K and Grizzle, Jessy W},
  volume={3},
  year={2002},
  publisher={Prentice hall Upper Saddle River, NJ}
}

@book{slotine,
  title={Applied nonlinear control},
  author={Slotine, Jean-Jacques E and Li, Weiping and others},
  volume={199},
  number={1},
  year={1991},
  publisher={Prentice hall Englewood Cliffs, NJ}
}

@ARTICLE{Morteza,
  author={Monfared, Morteza Nazari and Kawano, Yu and Cucuzzella, Michele},
  journal={IEEE Transactions on Control Systems Technology}, 
  title={Decentralized Voltage Control of Boost Converters in DC Microgrids: Feasibility Guarantees}, 
  year={2024},
  volume={},
  number={},
  pages={1-0},
  doi={10.1109/TCST.2024.3440228}}

@INPROCEEDINGS{Karunarathne,
  author={Karunarathne, Lakmal and Economou, John T. and Knowles, Kevin},
  booktitle={2008 IEEE Vehicle Power and Propulsion Conference}, 
  title={Fuzzy Logic control strategy for Fuel Cell/Battery aerospace propulsion system}, 
  year={2008},
  volume={},
  number={},
  pages={1-5},
  doi={10.1109/VPPC.2008.4677772}}

@article{Hoenicke,
title = {Power management control and delivery module for a hybrid electric aircraft using fuel cell and battery},
journal = {Energy Conversion and Management},
volume = {244},
pages = {114445},
year = {2021},
issn = {0196-8904},
author = {Pia Hoenicke and Debjani Ghosh and Adel Muhandes and Sumantra Bhattacharya and Christiane Bauer and Josef Kallo and Caroline Willich},
}

@article{bastos,
  title={Control strategy for an interleaved bidirectional DC--DC converter applied to battery management in a hybrid aircraft propulsion system},
  author={Bastos, Maria Eduarda S and Nascimento, Saulo O and Torres, Vitor CS and Almeida, Matheus S and Rend{\'o}n, Manuel A and Rodrigues, M{\'a}rcio CBP and Almeida, Pedro S and Oliveira, Jana{\'\i}na G},
  journal={Journal of Control, Automation and Electrical Systems},
  volume={33},
  number={3},
  pages={965--973},
  year={2022},
  publisher={Springer}
}

@INPROCEEDINGS{Ahmed,
  author={Ahmed Adam, Ahmed Hamed and Chen, Jiawei and Kamel, Salah and Domínguez-García, José Luis},
  booktitle={2024 International Conference on Artificial Intelligence, Computer, Data Sciences and Applications (ACDSA)}, 
  title={Power Management Control for Hybrid Electric Aircraft Propulsion Drive Based on Triple Active Bridge DC-DC Converter}, 
  year={2024},
  volume={},
  number={},
  pages={1-6},
  doi={10.1109/ACDSA59508.2024.10467866}}

@INPROCEEDINGS{syed,
  author={Syed, Wasif H. and Machado, Juan E. and Schiffer, Johannes},
  booktitle={2024 American Control Conference (ACC)}, 
  title={Distributed Adaptive Control for a DC Power Distribution System of a Series-Hybrid-Electric Propulsion System of a Commuter Aircraft}, 
  year={2024},
  volume={},
  number={},
  pages={2598-2603},
  doi={10.23919/ACC60939.2024.10644618}}

@misc{FuelCost,
  author  = {Kelly, S. and Kumar, D. K.},
  title   = {Jet fuel costs could rise from new rules to improve air quality},
  year    = {2019},
  note    = {Reuters}
}

@article{WorldData,
  title={Cars, Planes, Trains: Where do {CO2} Emissions From Transport Come From?},
  author={Hannah Ritchie},
  journal={Our World in Data.},
  year={2022}
}

@article{Schefer,
  title={Discussion on electric power supply systems for all electric aircraft},
  author={Schefer, Hendrik and Fauth, Leon and Kopp, Tobias H and Mallwitz, Regine and Friebe, Jens and Kurrat, Michael},
  journal={IEEE Access},
  volume={8},
  pages={84188--84216},
  year={2020},
  publisher={IEEE}
}

@article{XieYe,
  title={Review of hybrid electric powered aircraft, its conceptual design and energy management methodologies},
  author={Ye, XIE and Savvarisal, Al and Tsourdos, Antonios and Zhang, Dan and Jason, GU},
  journal={Chinese Journal of Aeronautics},
  volume={34},
  number={4},
  pages={432--450},
  year={2021},
  publisher={Elsevier}
}

@inproceedings{Braitor,
  title={Control of DC power distribution system of a hybrid electric aircraft with inherent overcurrent protection},
  author={Braitor, AC and Mills, AR and Kadirkamanathan, V and Konstantopoulos, GC and Norman, PJ and Jones, CE},
  booktitle={2018 IEEE International Conference on Electrical Systems for Aircraft, Railway, Ship Propulsion and Road Vehicles \& International Transportation Electrification Conference (ESARS-ITEC)},
  pages={1--6},
  year={2018},
  organization={IEEE}
}

@inproceedings{ZhangSau,
  title={Comparison of technical features between a more electric aircraft and a hybrid electric vehicle},
  author={Zhang, He and Saudemont, Christophe and Robyns, Beno{\^\i}t and Petit, Marc},
  booktitle={2008 IEEE Vehicle Power and Propulsion Conference},
  pages={1--6},
  year={2008},
  organization={IEEE}
}

@article{Magne,
  title={Active stabilization of {DC} microgrids without remote sensors for more electric aircraft},
  author={Magne, Pierre and Nahid-Mobarakeh, Babak and Pierfederici, Serge},
  journal={IEEE Transactions on Industry Applications},
  volume={49},
  number={5},
  pages={2352--2360},
  year={2013},
  publisher={IEEE}
}

@article{MengL2,
  title={Review on control of {DC} microgrids and multiple microgrid clusters},
  author={Meng, Lexuan and Shafiee, Qobad and Trecate, Giancarlo Ferrari and Karimi, Houshang and Fulwani, Deepak and Lu, Xiaonan and Guerrero, Josep M},
  journal={IEEE Journal of Emerging and Selected Topics in Power Electronics},
  volume={5},
  number={3},
  pages={928--948},
  year={2017},
  publisher={IEEE}
}

@inproceedings{Machado1,
  title={A passivity-inspired design of power-voltage droop controllers for {DC} microgrids with electrical network dynamics},
  author={Machado, Juan E and Schiffer, Johannes},
  booktitle={2020 59th IEEE CDC},
  pages={3060--3065},
  year={2020},
  organization={IEEE}
}

@article{Trip2,
  title={Distributed averaging control for voltage regulation and current sharing in {DC} microgrids},
  author={Trip, Sebastian and Cucuzzella, Michele and Cheng, Xiaodong and Scherpen, Jacquelien},
  journal={IEEE Control Systems Letters},
  volume={3},
  number={1},
  pages={174--179},
  year={2018},
  publisher={IEEE}
}

@inproceedings{Nahata,
  title={On existence of equilibria, voltage balancing, and current sharing in consensus-based {DC} microgrids},
  author={Nahata, Pulkit and Ferrari-Trecate, Giancarlo},
  booktitle={2020 ECC},
  pages={1216--1223},
  year={2020},
  organization={IEEE}
}

@article{Machado2,
  title={Online Parameters Estimation Schemes to Enhance Control Performance in {DC} Microgrids},
  author={Machado, Juan E and Rinaldi, Gianmario and Cucuzzella, Michele and Menon, Prathyush P and Scherpen, Jacquelien MA and Ferrara, Antonella},
  journal={European Journal of Control},
  pages={100860},
  year={2023},
  publisher={Elsevier}
}

@article{Noroozi2,
  title={Model predictive control of {DC} microgrids: current sharing and voltage regulation},
  author={Noroozi, Navid and Trip, Sebastian and Geiselhart, Roman},
  journal={IFAC-PapersOnLine},
  volume={51},
  number={23},
  pages={124--129},
  year={2018},
  publisher={Elsevier}
}

@article{arjan_book_port_Ham,
  title={Port-Hamiltonian systems theory: An introductory overview},
  author={Van Der Schaft, Arjan and Jeltsema, Dimitri and others},
  journal={Foundations and Trends{\textregistered} in Systems and Control},
  volume={1},
  number={2-3},
  pages={173--378},
  year={2014},
  publisher={Now Publishers, Inc.}
}

@article{astolfi_ortega,
  title={Dynamic extension is unnecessary for stabilization via interconnection and damping assignment passivity-based control},
  author={Astolfi, Alessandro and Ortega, Romeo},
  journal={Systems \& Control Letters},
  volume={58},
  number={2},
  pages={133--135},
  year={2009},
  publisher={Elsevier}
}

@article{canseco_ortega_IDA_PBC,
  title={Interconnection and damping assignment passivity-based control: A survey},
  author={Ortega, Romeo and Garcia-Canseco, Eloisa},
  journal={European Journal of Control},
  volume={10},
  number={5},
  pages={432--450},
  year={2004},
  publisher={Elsevier}
}

@book{arjan_book_l2S_3ed,
  title={L2-gain and passivity techniques in nonlinear control},
  author={Van der Schaft, Arjan},
  year={2017},
  publisher={Springer}
}

@article{HandongBai,
title = {Voltage regulation and current sharing for multi-bus DC microgrids: A compromised design approach},
journal = {Automatica},
volume = {142},
pages = {110340},
year = {2022},
issn = {0005-1098},
author = {Handong Bai and Hongwei Zhang and He Cai and Johannes Schiffer}
}

@article{krishna_power-hardware---loop_2022,
	title = {A power-hardware-in-the-loop testbed for intelligent operation and control of low-inertia power systems},
	volume = {70},
	issn = {2196-677X},
	language = {en},
	number = {12},
	urldate = {2023-05-31},
	journal = {at - Automatisierungstechnik},
	author = {Krishna, Ajay and Jaramillo-Cajica, Ismael and Auer, Sabine and Schiffer, Johannes},
	month = dec,
	year = {2022},
	note = {Publisher: De Gruyter (O)},
	pages = {1084--1095},
}

\newpage

\vfill

\end{document}